\documentclass[runningheads]{llncs}
\usepackage[T1]{fontenc}
\usepackage{color}

\usepackage{orcidlink}
\usepackage{paralist}
\usepackage{xcolor}
\usepackage{listings}
\usepackage{quantikz}
\usepackage{array}
\usepackage{amsfonts} %provides \mathbb
\usepackage{xfrac} % provides \sfrac
\usepackage{amssymb} % provides \triangleq
\usepackage{xspace} % provides \xspace
\usepackage{stmaryrd} % provides \llbracket and \rrbracket
\RequirePackage{thm-restate} %provides restatable environment
\usepackage{hyperref} % provides \hypertarget
\usepackage{proof} % provides \infer
\usepackage[]{algorithm2e}
\usepackage{tabularx}
\usepackage{booktabs} % provides \toprule
\usepackage{float}
\usepackage{wrapfig}
\usepackage{placeins} % provides \FloatBarrier
\usepackage{quantikz}
\usepackage[english]{babel}

\definecolor{cat1}{RGB}{220, 220, 220} % Light gray
\definecolor{cat2}{RGB}{180, 180, 180} % Medium-light gray
\definecolor{cat3}{RGB}{140, 140, 140} % Medium-dark gray
\definecolor{cat4}{RGB}{100, 100, 100} % Dark gray
\usepackage[disable]{todonotes}

\newcommand{\ms}[1]{\todo[color=red!40]{MS: #1}}

\usepackage{tikz}
\usetikzlibrary{shapes, positioning, arrows.meta, backgrounds}
\usetikzlibrary{positioning, shapes, arrows.meta, fit, backgrounds}
\usepackage{pwhile}
\usepackage{paper}

\begin{document}
\title{Automated Expected Cost Analysis for Quantum Programs}
%
% \titlerunning{AutoCostQ}  % abbreviated title for the running head
%
\author{Georg Moser\inst{1}\raisebox{1ex}{\orcidlink{0000-0001-9240-6128}} \and
Michael Schaper\inst{1}\raisebox{1ex}{\orcidlink{0009-0004-6620-5073}}}
\authorrunning{G. Moser and M. Schaper}
% First names are abbreviated in the running head.
% If there are more than two authors, 'et al.' is used.
%
\institute{
 Universität Innsbruck, Technikerstr. 21A, 6020 Innsbruck, Austria
\email{\{georg.moser,michael.schaper\}@uibk.ac.at}}
%

% \author{Anonymous Authors}
% \institute{Anonymous Institution}

\maketitle              % typeset the header of the contribution

\begin{abstract}
    % Quantum computing is advancing rapidly, but programming remains challenging and error-prone, highlighting the need for automated tools. 
    % We present $\TOOL$, a fully automated static program analysis tool for expected costs of mixed classical-quantum programs.
    % The tool is based on a recently proposed quantum expectation transformer framework, inspired by Dijkstra's predicate transformer and Hoare logic.
    % It supports programs with advanced features like mid-circuit measurements and classical control flow.
    % We evaluate our tool through a number of case studies taken from the literature and infer precise cost bounds fully automatically that previously required manual derivation.\gm{expand}\ms{?}
  
    In recent years, quantum computing has gained a substantial amount of momentum, and the capabilities of
    quantum devices are continually expanding and improving. Nevertheless, writing a quantum program from scratch remains
    tedious and error-prone work, showcasing the clear demand for automated tool support.
    We present $\TOOL$, a \emph{fully automated} static program analysis tool that yields a precise
    expected cost analysis of \emph{mixed classical-quantum programs}. $\TOOL$ supports programs with advanced features
    like mid-circuit measurements and classical control flow. 
    The methodology of our prototype implementation is based on
    a recently proposed quantum expectation transformer framework, generalising Dijkstra's predicate transformer and Hoare logic.
    The prototype implementation~$\TOOL$ is evaluated on a number of case studies taken from the literature and online references.
    $\TOOL$ is able to fully automatically infer precise upper bounds on the expected costs that previously could only be derived by tedious
    manual calculations.
    
    % In this work, we work towards
    % this challenge by considering tooling support for the analysis of quantitative properties of mixed
    % classical-quantum programs.
    %
    % More precisely, we present a fully automated approach for the expected cost analysis of quantum programs. Our
    % automation rests upon a recently introduced quantum expectation transformer, a framework closely
    % related to Dijkstra's predicate transformer and Floyd-Hoare's logic.
    % To the best of our knowledge, the prototype implementation presented here is the first automation
    % of an expectation transformer framework for quantum programs.
    %
    % We evaluate our approach through a number of case studies taken from the literature.
    % These programs make use of advanced features including mid-circuit measurements and classical control flow.
    % For almost all case studies,
    % this showcases how our method can fully automatically derive a precise and optimal
    % expected cost analysis that so far could only be obtained manually.
\keywords{Quantum Programs \and Static Program Analysis \and Cost Analysis}
\end{abstract}

\section{Introduction}

Quantum computing (QC) has advanced rapidly, with competing hardware technologies (ion traps, superconducting circuits, neutral atoms, optics, \dots) and diverse computational models (MBQC~\cite{RBB03}, QRAM~\cite{Knill96}, ZX-diagrams~\cite{CD11},  quantum circuits, \ldots) driving progress, cf.~\cite{Greengard25c}.
At the software level, various programming languages and tools have emerged~\cite{BBH:2021,BandicFA22}, yet writing quantum programs remains challenging and error-prone. 
Programmers must master quantum mechanics, algorithms, and programming languages, often without adequate debugging or verification tools, leading to frequent bugs, cf.~\cite{Paltenghi:2022}. 
Consequently, developers face a steep learning curve and limited support for testing, debugging, or automated reasoning.
Automated program analysis for high-level quantum programs is critically lacking, especially for dynamic algorithms, which involve mid-circuit measurements, classical control flow and repetition.

In this context, it is important to distinguish between static analysis of functional properties and non-functional properties.
Further, it is also useful to distinguish between (static) quantum circuits and (dynamic) quantum programs with mid-circuit measurements and classical control flow.
Various tools provide verification of functional properties.
For example, \textsf{SymQV}~\cite{BauerMarquartLS23} and \textsf{Silver}~\cite{LewisZS24} focus on quantum circuits, while \textsf{AutoQ~2.0}~\cite{ChenCHHLLT25} extends to dynamic quantum programs.
However, one of the primary motivations for QC is the potential for computational speed-up, making it essential to reason about non-functional properties including performance and cost of execution.
Existing resource estimation tools such as Microsoft's \textsf{Azure Quantum Resource Estimator}~\cite{AQRE} and Google's \textsf{Qualtran}~\cite{qualtran} provide hardware-specific cost estimates but are limited to static quantum circuits.
They cannot handle dynamic algorithms that leverage mid-circuit measurements.
Examples of such algorithms include \emph{repeat-until-success} circuits~\cite{PaetznickS14}, which use mid-circuit measurements with classical feedback, and \emph{weak Grover search}~\cite{MartinezH22}, which combines quantum amplitude amplification with classical control.
The former was previously featured as a showcase on the IBM Quantum\textsuperscript{\textregistered{}} Learning platform.\footnote{\url{https://learning.quantum.ibm.com/tutorial/repeat-until-success}, last accessed January 2026.}
This motivates the development of an automated (expected) cost analysis for dynamic quantum algorithms.

% Our work rises to this challenge and introduces the tool~\TOOL\ for \emph{expected cost analysis}, leveraging recent theoretical advances and providing practical automation for real-world quantum programs. 
% In particular, the tool \emph{fully automatically} infers sound upper bounds on the expected stopping time of real-world quantum programs.
% More generally, program costs can be modelled via a \emph{ticking} instrumentation.
% seamlessly representing a non-monotonic cost model.

\paragraph{Contributions.}
The main contribution of this paper is to detail the fundamentals and technical realisation of
a tool---dubbed~\TOOL---that can automatically infer expected costs of quantum programs written in the \IMQ\ programming language, a mixed classical-quantum imperative language with support for quantum operations and measurements, cf.~\cite{AMPPZ:LICS:22}.
More precisely, we make the following contributions:
%
% GM: SPACE change to inparaenum if 
\begin{itemize}
%\begin{inparaenum}[1)]  
\item
  A novel term-based representation of the quantum expectation transformer by Avanzini et al.~\cite{AMPPZ:LICS:22,AMPP24}.
This representation suitably abstracts the underlying semantic model for cost expectations and replaces fixed-point constructions via first-order templates.
\item
  % The implementation of the tool \TOOL.
  A prototype implementation of the introduced term-based representation of the quantum expectation transformer
  in the tool~\TOOL. 
  $\TOOL$ fully automatically derives sound upper bounds on the cost of the input program in
  expectation. The prototype implementation will be submitted to the artefact evaluation.
\item
  Finally, an extensive validation of the tool on various quantum algorithms from the literature and online resources.
\end{itemize}
%\end{inparaenum}

\paragraph{Outline.}

Section~\ref{s:overview} provides a high-level overview of our contributions, and Section~\ref{s:related} discusses related work.
Section~\ref{s:preliminaries} describes syntax and semantics of the \IMQ\ programming language and recalls the expectation transformer.
Section~\ref{s:automation} details the implementation of the tool \TOOL\ and validates the tool on various quantum algorithms from the literature.
Finally, we conclude in Section~\ref{s:conclusion}.
The Appendix in Section~\ref{s:appendix} contains supplementary information, additional code examples and omitted proofs.

\clearpage

\section{Overview}
\label{s:overview}

\newcommand{\cq}{\Variable{ctrl}}
\newcommand{\trgt}{\Variable{trgt}}
\newcommand{\notdone}{\Variable{repeat}}

\newcommand{\xtwo}{-X}

\begin{wrapfigure}{r}{0.33\textwidth}
\vspace{-\baselineskip}
%\vspace{0mm}
\centering
\begin{minipage}[t]{0.32\textwidth}
    $
    {-}X(\x^{\Bool},\q_1^{\Qubits}, \q_2^{\Qubits}) \quad\triangleq\quad \\
    \phantom{xx}\pw{
        \tikzrom{cq-mark1}
      \q_1 <* H; \\
      \q_1,q_2 <* CNOT; \\
      \x <- \MEAS{\q_1};\hspace{1ex}\tikzrom{cq-mark3}
      \tikzrom{cq-mark2} \\
      \WHILE \neg \x \DO {
        \CONSUME~1; \\
        \q_1 <* X; \\
        \tikzrom{cq-mark1}
        \q_1 <* H; \\
        \q_1,q_2 <* CNOT; \hspace{1ex}\tikzrom{cq-mark3} \\
        \x <- \MEAS{\q_1}; 
      }
    }
    $
% \begin{quantikz}
%   % q1 *= X
%   % q2 *= H
%   % q1,q2 *= CNOT
%   % x = meas q1
%   \lstick{$\q_1$} & \gate{X} & \ctrl{1} & \meter{} \qw & \qw \\
%   \lstick{$\q_2$} & \gate{H} & \targ{}  & \qw                    & \qw \\
%   \lstick{$\x$}   & \cw      & \cw      & \cwbend{-2}                    & \cw
% \end{quantikz}
 \end{minipage}
\caption{The $-X$ program.}
\label{fig:X2}
\vspace{-\baselineskip}
\end{wrapfigure}%

This work presents \TOOL, a static program analysis tool for mixed classical-quantum programs, focusing on expected cost analysis. 
The input is a program written in the \IMQ\ language, an extension of Dijkstra's Guarded Command Language with support for quantum operations and cost annotations, as formally introduced in~\cite{AMPPZ:LICS:22,AMPP24}.
The output is either a sound upper bound (i.e. an arithmetic expression over input state variables) on the expected cost of executing the program or \texttt{Unknown} if the analysis fails to infer a bound.
It is important to note that the expected cost analysis of quantum programs is, in general, an undecidable problem~\cite{AMPP24}.

% \vspace{-\baselineskip}
% %\vspace{0mm}
% \centering
% \caption{Quantum circuit for the $-X$ program: the body applies $H$, $CNOT$, measures $\q_1$ into $\x$, and, if $\x=1$, applies $X$ to $\q_1$ and repeats.}
% \label{fig:X2-circuit}
% \vspace{-\baselineskip}
% \end{wrapfigure}%

Figure~\ref{fig:X2} shows the $-X$ program adapted from~\cite{ChenCHHLLT25}, which operates on a two-qubit system ($\q_1$ and $\q_2$).
The program applies a non-standard $-X$ gate to the second qubit, which performs the \mbox{Pauli-$X$} (i.e. the quantum \texttt{Not} gate) and negates the amplitude.
For instance, it transforms the quantum state (in Dirac notation) $\alpha \ket{10} + \beta \ket{11}$ into $-\beta \ket{10} - \alpha \ket{11}$.

The program begins by applying the Hadamard gate $\oper{H}$ to $\q_1$, which creates a superposition by transforming basis states as $H\ket{0} = \frac{1}{\sqrt{2}}(\ket{0}+\ket{1})$ and $H\ket{1} = \frac{1}{\sqrt{2}}(\ket{0}-\ket{1})$.
Next, the controlled-NOT gate $\oper{CNOT}$ is applied with $\q_1$ as control and $\q_2$ as target, which flips the target qubit if and only if the control qubit is in state $\ket{1}$, creating entanglement between the qubits.
The qubit $\q_1$ is then measured.
For input state $\alpha\ket{00} + \beta\ket{01} + \gamma\ket{10} + \delta\ket{11}$ the measurement yields $\ket{0}$ with probability $\sfrac{1}{2}(|\alpha+\gamma|^2 + |\beta+\delta|^2)$ and $\ket{1}$ with probability $\sfrac{1}{2}(|\alpha-\gamma|^2 + |\beta-\delta|^2)$.
% This requires careful computations with complex amplitudes.
%
Measurements introduce probabilistic branching, reflecting the inherent uncertainty in quantum programming.
If the measurement result is $\ket{0}$ the program enters the while loop with quantum state $\alpha \ket{00} + \beta \ket{01}$ for some (normalised) amplitudes $\alpha$ and $\beta$.
Then gate $\oper{X}$ is applied to reset $\q_1$, and the sequence of $\oper{H}$, $\oper{CNOT}$, and measurement is repeated.
This loop continues until the measurement yields $\ket{1}$, at which point the desired $-X$ transformation has been successfully applied to $\q_2$.
Each iteration consumes one unit of cost (via $\CONSUME~1$).
%
% Due to the representation of quantum states and the interplay between quantum and classical components, static analysis of quantum programs is challenging.
% Consider as input state $\alpha\ket{00} + \beta\ket{01} + \gamma\ket{10} + \delta\ket{11}$.
% Then, the first measurement of $\q_1$ yields $\ket{0}$ with probability $\sfrac{1}{2}(|\alpha+\gamma|^2 + |\beta+\delta|^2)$ and $\ket{1}$ with probability $\sfrac{1}{2}(|\alpha-\gamma|^2 + |\beta-\delta|^2)$.
% This requires careful computations with complex amplitudes and dynamic probabilities.
%
%The tool~\texttt{AutoQ 2.0}~\cite{ChenCHHLLT25} provides a Hoare-style verification with user-provided predicates and loop invariants to verify the intended behaviour.
%
% In this work, the \texttt{tick} instruction serves as a cost annotation to track resource usage during program execution.
For the ${-}X$ program, the total number of ticks represents the total number of loop iterations executed.
Due to the probabilistic nature of QC, we focus on the \emph{expected} number of iterations.
\TOOL\ fully automatically infers the tight bound $2 \cdot (\sfrac{1}{2} + a_{13} + a_{24})$ on the expected cost of executing this program (here $a_{13}$ and $a_{24}$ are relevant real-valued expressions of the density matrix representation of the initial quantum state).
For the automated analysis of this program, \TOOL\ needs to reason about quantum operations, dynamic probabilistic branching due to measurements, and fixed point computation of loop semantics.
This requires a suitable abstract representation of quantum states and a sophisticated constraint generation and solving procedure.

\begin{figure}[t]
  \centering
\begin{tikzpicture}[
  box/.style={rectangle, draw, rounded corners, minimum height=2em, minimum width=4em, align=center, font=\small},
  arrow/.style={-Stealth, thick},
  decision/.style={diamond, draw, minimum height=2em, minimum width=4em, align=center, font=\small},
  process/.style={box, minimum width=6em},
  inputoutput/.style={box, trapezium, trapezium left angle=70, trapezium right angle=110, minimum width=4em},
  ]

  \node[inputoutput] (input) at (0,0) {Quantum\\Program};
  \node[process, right=1.5cm of input] (symbolic) {\textbf{Symbolic}\\\textbf{Transformer}};
  \node[process, below=0.7cm of symbolic] (term) {Term\\Constraints};
  \node[process, right=1.5cm of term] (cost) {Cost\\Constraints};
  \node[process, below=0.7cm of cost] (poly) {Polynomial\\Constraints};
  \node[process, left=1.5cm of poly] (smt) {Certificate\\Constraints};
  \node[process, left=1.5cm of smt] (solver) {SMT Solver};
  \node[inputoutput, right=4.2cm of symbolic] (output) {Cost or\\\texttt{Unknown}};
  \node[left=0.2cm of symbolic](A){};
  \node[right=0.1cm of cost](B){};

  \node[draw, dashed, fit=(A)(symbolic)(cost)(poly)(smt)(B)] (tool) {};
  \draw[arrow] (input) -- (symbolic);
  \draw[arrow] (symbolic) -- (output);
  % \draw[arrow] (symbolic) -- (term);
  \draw[arrow] (symbolic) -- node[right, font=\small, align=center, text=gray] {Sec.~\ref{ss:inference-term-constraints}} (term);
  % \draw[arrow] (term) -- node[above, font=\small, align=center] {Polynomial\\Template} (cost);
  \draw[arrow] (term) -- node[above, font=\small, align=center, text=gray] {Sec.~\ref{ss:inference-cost-constraints}} (cost);
  % \draw[arrow] (cost) -- (poly);
  \draw[arrow] (cost) -- node[right, font=\small, align=center, text=gray] {Sec.~\ref{ss:inference-polynomial-constraints}} (poly);
  % \draw[arrow] (poly) -- node[above, font=\small, align=center] {Certificate\\Template} (smt);
  \draw[arrow] (poly) -- node[above, font=\small, align=center, text=gray] {Sec.~\ref{ss:inference-certificate-constraints}} (smt);
  \draw[arrow] (smt) -- (solver);
  \draw[arrow] (solver) -- (smt);
  \draw[arrow, bend left=40] (smt.north west) to (symbolic.south west);
\end{tikzpicture}
\caption{Overview of the expected cost analysis tool \TOOL.}
\label{fig:workflow}
\vspace{-\baselineskip}
\end{figure}
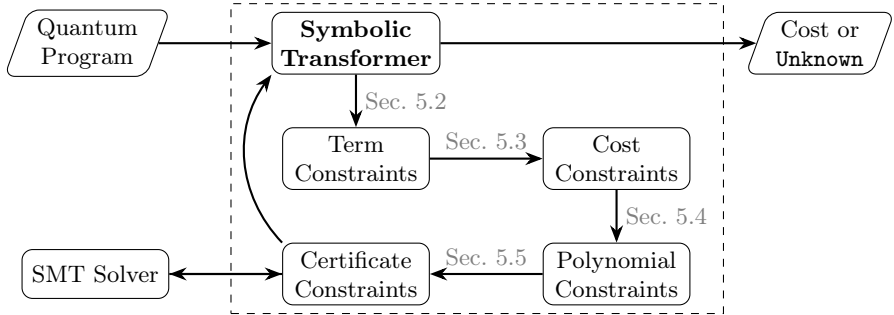

Figure~\ref{fig:workflow} provides an overview of the tool's architecture and workflow, highlighting its key components and their interactions.
The implementation automates the formal framework of quantum expectation transformers~\cite{AMPPZ:LICS:22,AMPP24} and is inspired by established techniques for expected cost analysis of probabilistic programs~\cite{AMS20,AMS23}.
%
% The formal framework provides a semantic model of the expected cost of quantum programs.
For automation, we introduce a \emph{symbolic transformer}: a syntactic variant of the quantum expectation transformer that operates on symbolic term-expectations (Subsection~\ref{ss:inference-term-constraints}).
To keep analysis tractable, the transformer manipulates a partial, algebraic representation of density matrices and probabilities.
Furthermore, fixed-point computations are avoided by generating upper invariant \emph{term constraints} for loops.
Term constraints generated during symbolic evaluation are refined to \emph{cost constraints} (Subsection~\ref{ss:inference-cost-constraints}), which have been successfully used for automation in classical and probabilistic settings~\cite{AMS20,AMS23,pubs}.
Then, cost constraints are further simplified into \emph{polynomial constraints} (Subsection~\ref{ss:inference-polynomial-constraints}).
Finally, polynomial constraints are reduced to \emph{certificate constraints} amenable for \textsf{SMT} solvers (Subsection~\ref{ss:inference-certificate-constraints}).
\ms{expanded this to clarify the loop}
The analysis algorithm is inherently iterative. 
The tool decomposes the program into strongly connected components and loops and employs a bottom-up inference strategy.
Inferred certificates are fed back to the symbolic transformer and propagated until a sound bound is inferred for the entire program.
If the polynomial constraints cannot be solved, the tool either reports \texttt{Unknown} or backtracks to recover from heuristic choices.

\paragraph{Motivating Example.}
\ms{added}
One prominent application of mid-circuit measurements are \emph{Repeat-Until-Success (RUS)} programs presented by Paetznick and Svore~\cite{PaetznickS14}. 
These algorithms are used to synthesise unitaries by repeated application of a fixed sequence of elementary quantum operations followed by a measurement on control qubits, until the desired measurement outcome is achieved. 
Notably, RUS programs maintain the quantum state of the qubits across iterations, rather than preparing always a fresh state.
% This synthesis approach often provides circuits with a smaller (expected)\pw{T}~gate count than other synthesis methods without mid-circuit measurements. Repeat-until-success algorithms are not meant to be stand-alone, but function as helper programs to be substituted into larger contexts as subprograms.

The repeat-until-success approach has been successfully realised on IBM Quantum\textsuperscript{\textregistered{}} devices, showcasing the practical application of dynamic circuits with mid-circuit measurements.%
\footnote{\url{https://learning.quantum.ibm.com/tutorial/repeat-until-success}, last accessed January 2026.}
% In what follows, we discuss the IBM showcase available online\footnote{\url{https://learning.quantum.ibm.com/tutorial/repeat-until-success}, last accessed January 2026.} in more detail.
% The source program is written using the \textsf{Qiskit SDK}~\cite{Wille:2019} and can be compiled to the \textsf{OpenQASM3} intermediate representation.
% It also can be executed on the IBM Quantum\textregistered{} platform.
%
%
The quantum circuit of the showcase is depicted in~Figure~\ref{fig:trial-circuit}.
% In contrast to the original specification in which all qubits are initialised to the basis state $\ket{0}$ by default, only the control qubits are explicitly set to $\ket{0}$, since no assumptions about the input state are made in our setting.
%
This quantum program has three qubits, two control qubits $\cq_1$ and $\cq_2$ and one target qubit $\trgt$.
The classical variables $\x_1$ and $\x_2$ are used to store and process the measurement outcomes of the control qubits, while the classical variable $\notdone$ is used to propagate the success of the operation.
On a successful measurement outcome of the control qubits---$\cq_1$ and $\cq_2$ are measured to be in state $\ket{00}$---%
the rotation operation $\pw{R_X}(\theta)$, with $\cos \theta = \sfrac{3}{5}$, has been successfully performed on the target qubit.

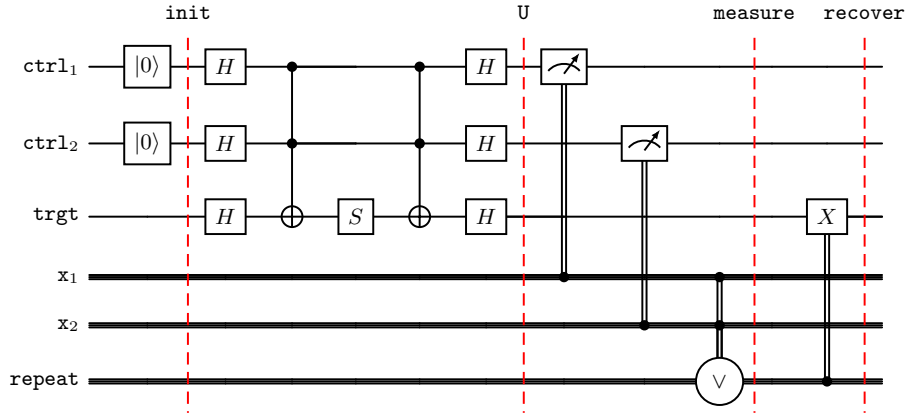
\begin{figure}[t]
\centering
\scalebox{0.93}{
\begin{quantikz}
\lstick{$\cq_1$}    & \gate{\ket{0}} \slice{\pw{init}}  & \gate{H} & \ctrl{1} & \qw      & \ctrl{1} & \gate{H} \slice{\pw{U}} & \meter{}    &         & \slice{\pw{measure}} &  & \slice{\pw{recover}}   & \\
\lstick{$\cq_2$}    & \gate{\ket{0}}   & \gate{H} & \ctrl{1} & \qw      & \ctrl{1} & \gate{H} &             & \meter{}    & &    & & \\
\lstick{$\trgt$}    &                  & \gate{H} & \targ{}  & \gate{S} & \targ{}  & \gate{H} & \qw         &             & & & \gate[wires=1]{X} & \qw \\
\lstick{$\x_1$}     & \cw              & \cw      & \cw      & \cw      & \cw      & \cw      & \cwbend{-3} & \cw         & \cwbend{2} & \cw &  \cw & \cw\\
  \lstick{$\x_2$}     & \cw              & \cw      & \cw      & \cw      & \cw      & \cw      & \cw         & \cwbend{-3} & \cwbend{1} & \cw & \cw & \cw \\
\lstick{$\notdone$} & \cw              & \cw      & \cw      & \cw      & \cw      & \cw      & \cw         & \cw         & \gate[style={shape=circle}]{\ensuremath{\vee}} \cw & \cw & \cwbend{-3} & \cw
\label{circuit:notdone}  
\end{quantikz}
}
\vspace{-2mm}
\caption{The quantum circuit for the $RUS$ program.}
\label{fig:trial-circuit}
\end{figure}

This circuit is executed repeatedly until a successful measurement outcome is achieved to perform the rotation on the target qubit.
\TOOL\ successfully infers the expected stopping time (i.e. the expected number of iterations until a successful measurement) of $\sfrac{8}{5}$ 
% as well as the expected number of $\pw{T}$ gates required to execute the repeat-until-success program of $\sfrac{16}{5}$, 
fully automatically and without user-provided invariants.
The complete case study is presented in Appendix~\ref{app:repeat-until-success}.

\section{Related Work}
\label{s:related}

\textsf{Azure Quantum Resource Estimator}\cite{AQRE} and \textsf{Qualtran}~\cite{qualtran} provide hardware specific resource estimation for quantum programs.
These estimations are based on logical resources, such as the number of qubits and the number of \pw{T} gates, which are obtained from static programs (i.e.\@ in which control flow does not depend on measurements).
The tool \TOOL\ complements existing resource estimation tools by providing expected cost analysis (including logical resources) for programs with mid-circuit measurements and classical control flow.

% \textsf{QCEC}~\cite{qcec} from the \textsf{Munich Quantum Toolkit}~\cite{mqt} and \textsf{Feynman}~\cite{feynman} implement equivalence checking for circuits with no support for classical control flow.
% This is particularly relevant for compiler optimisations and circuit transformations. 
%
% \textsf{AutoQ}~\cite{ChenCLLT23} provides verification of static circuits based on tree automata techniques and leverages \textsf{SMT} solvers to check the generated verification conditions.
% Here the user provides pre- and postconditions. 
%
\ms{expanded this: includes silver and symqv}
The tools \textsf{SymQV}~\cite{BauerMarquartLS23} and \textsf{Silver}~\cite{LewisZS24} use symbolic execution to infer verification conditions and leverage \textsf{SMT} solvers.
\textsf{SymQV} provides verification of quantum circuits and supports measurements but no classical control flow.
\textsf{Silver} provides verification of \textsf{Silq}~\cite{silq20} programs and supports mid-circuit measurements and classical conditions but not while loops.
Both tools focus on verification of functional properties.
This is performed by encoding symbolic program states, preconditions, and the negation of the postcondition as \textsf{SMT} constraints and proving correctness by showing that the resulting formula is unsatisfiable.
The tool \TOOL\ uses symbolic execution to infer input-output relations of expected cost between program locations.
However, the tool focuses on expected cost and automatically infers a suitable \emph{witness}---a bounding function that serves as an upper invariant.
This synthesis problem is inherently more challenging and consequently the generated constraints are more complex and require advanced techniques for constraint solving.
Adding support for while loops further increases the complexity of the analysis as it requires suitable abstractions of loops. 
% A key difference is in the nature of the properties being verified or inferred and the support of loops. 
% In related tools, verification conditions are expressed over the symbolic program state and checked for validity.
% In contrast, \TOOL\ automatically infers a suitable \emph{witness}---a bounding function that serves as an upper invariant for the expected cost.
% This synthesis problem is inherently more challenging and consequently the generated constraints are more complex and require advanced techniques for constraint solving.
%
The state representation in related tools as well as \TOOL\ restricts to programs with a finite number of qubits and is fundamentally exponential, while various optimisations are applied to improve scalability.
In practice, \textsf{SymQV} and \textsf{Silver} can handle larger programs which is due to the increased complexity of expected cost analysis and support for while loops in \TOOL.
\textsf{AutoQ 2.0}~\cite{ChenCHHLLT25} (successor of \textsf{AutoQ}~\cite{ChenCLLT23}) provides automated verification of quantum programs using tree automata.
Here, the user provides pre- and postconditions for the quantum program, as well as loop invariants, which are then checked by the tool.
The tool leverages tree automata for efficient representation and manipulation of symbolic quantum states.
While the analysis is still limited to finite-dimensional state spaces, this approach scales in many examples better than the explicit symbolic representation employed in \TOOL.
\textsf{QPMC}~\cite{FengHTZ15} is a model checker for quantum programs and quantum protocols described as discrete-time Markov chains in which transitions are labelled with superoperators.
This tool focuses on the verification of quantum protocols, but also motivates verification of programs with loops.
Both tools provide automated verification of quantum programs, with mid-circuit measurements and loops.
While \TOOL\ targets a similar class of programs, it focuses on quantitative analysis, automatically inferring upper bounds on the expected cost.

% GM: reformulated
The formal framework introduced by Avanzini et al.\ in~\cite{AMPPZ:LICS:22,AMPP24}---dubbed \emph{quantum expectation transformer}---
forms an extension of Dijkstra's \emph{predicate transformer semantics}~\cite{Dijkstra75} and is strongly related to
Floyd-Hoare's logic~\cite{Hoare69} for reasoning about imperative programs.
These methodologies have been generalised to the \emph{weakest pre-expectation semantics}~\cite{GKM14} for probabilistic programs.
Kaminski et al.~\cite{KKMO:ACM:18,KKM20} investigate the expected runtime of probabilistic programs and introduces the \emph{expected cost transformer} for probabilistic programs.
% Automation of expectation transformers for probabilistic programs have been studied and realised by \cite{NCH18,AMS20,AMS23}.
%
Liu et al.~\cite{LiuZBY25} employ \emph{quantum weakest preconditions} to formally reason about expected runtime;
thereby extending earlier work~\cite{DHondtP06}.
However, automation is only considered superficially.
Our automation in \TOOL\ employs earlier automation efforts of \cite{KKMO:ACM:18}'s runtime transformer,
namely~\cite{NCH18,AMS20,AMS23}.

% \begin{figure}[t]
% % \hrulefill
% \[
%   \begin{array}{rlll}
%     \NExp &  \ni \nexp, \nexp_1, \nexp_2
%     & \rgl &   \x^{\NVar} \mid i \in \mathbb{Z} \mid
%              \nexp_1 + \nexp_2   \mid  \nexp_1 - \nexp_2  \mid \nexp_1 \times \nexp_2 \\
%     \BExp &  \ni \bexp, \bexp_1, \bexp_2
%     & \rgl &  \x^{\Bool}   \mid \true \mid \false \mid
%              \nexp_1 = \nexp_2  \mid  \nexp_1 < \nexp_2 \mid \neg \bexp \mid \bexp_1 \wedge \bexp_2 \mid \bexp_1 \vee \bexp_2 \\
%     \Exp &  \ni \e, \e_1, \e_2
%     & \rgl &  \nexp \mid \bexp \\
%     \Cmds &  \ni  \cmd, \cmd_1, \cmd_2
%     & \rgl &  \SKIP \mid  \x^{\K} <- \e^{\K}  \mid  \cmd_1; \cmd_2\mid\  \!  \IF \bexp^{\Bool} \THEN \cmd_1  \ELSE \cmd_2   \\
%           & & &\mid\ \WHILE \bexp^{\Bool} \DO \cmd
%                 \mid  \qs^{\Qubits}  <* \ope \mid   \x^{\Bool} <- \MEAS{\q^{\Qubits}} \mid \CONSUME~\nexp
%   \end{array}
% \]
% \vspace{-2\belowdisplayskip}
%  % \hrulefill
%  \caption{Syntax of the mixed classical-quantum imperative programming language \IMQ.}
% \label{fig:synt}
% \end{figure}

\section{Background}
\label{s:preliminaries}

% This work is concerned with automating the quantum expectation transformer of Avanzini et al.~\cite{AMPPZ:LICS:22,AMPP24}.
In this section, we recall key definitions and results from the literature.
We briefly describe the syntax of the mixed classical-quantum programming language \IMQ\ and present the quantum expectation transformer $\wpt{\cdot}{\cdot}$.
Due to space restrictions, the reader is kindly referred to~\cite{AMPPZ:LICS:22} for further details and motivations, as well
as to~\cite{MikeAndIke} for further background on quantum computing.
%
% Next, we discuss syntax and semantics of quantum programs.

\subsection{Mixed Classical-Quantum Programming Language}
\label{ss:lang}

\paragraph{Quantum Computing.}
Given a totally ordered set of qubits $Q=\{\q_1,\ldots,\q_n\}$, let $\mathcal{H}_{Q}$ be the $2^{n}$-dimensional Hilbert space defined by $\mathcal{H}_{Q} \triangleq \otimes_{i=1}^n \mathcal{H}_{\q_i}$, with $\mathcal{H}_{\q} = \Comp^{2}$ being the vector space with computational basis $\{\ket{0},\ket{1}\}$ and $\otimes$ being the tensor product. 
With $\bra{k}$ we denote the conjugate transpose of $\ket{k}$ for $k \in \{0,1\}$.
Let $\matrixspace$ be the set of complex square matrices acting on the Hilbert space $\mathcal{H}_{Q}$, i.e., $\matrixspace=\Comp^{2^n \times 2^n}$. 
Given $M \in \matrixspace$, $M^\dagger$ denotes the conjugate transpose of $M$, and $I_{2^n}$ denotes the identity matrix over $\matrixspace$. 
We will write $I$ when the dimension is clear from the context. 
%
%Following~\cite{AMPP24}, the \emph{quantum state} is modelled as a density operator.
Let $\densop \subsetneq \matrixspace$ be the set of all \emph{density operators}, i.e., positive semi-definite matrices of trace equal to $1$ on $\mathcal{H}_{Q}$. 
% Density operators can be viewed as the mathematical representation of a (mixed) quantum state. 
A \emph{unitary operator} $U$ is a matrix in $\matrixspace$ such that $UU^\dagger = U^\dagger U = I$.
% A \emph{superoperator} $\Phi_U : \densop \to \densop $, an endomorphism over density operators, is attached to each unitary operator $U$  and defined by $\Phi_U \triangleq \lambda \rho.U\rho U^\dagger$.
We denote the application of a unitary by $\Phi_U \triangleq \lambda \rho.U\rho U^\dagger$.
% By definition, $\Phi_U$ is a completely positive, trace-preserving linear map. Indeed, $tr(U\rho U^\dagger)=tr(\rho)$, by unitarity. 
% Hence, $U\rho U^\dagger$ is a density operator in $\densop$ for each $\rho \in \densop$.
%
Regarding measurements, for each $i$, $1 \leqslant i \leqslant card(Q)$, we define $\meas{k}{i} \in \matrixspace$, with $k \in \{0,1\}$, by $\meas{0}{i} \triangleq  I_{2^{i-1}} \otimes (\ket{0}\bra{0}) \otimes I_{2^{n-i}}$ and $\meas{1}{i} \triangleq I - \meas{0}{i}$. 
The measurement of the qubit $q_i$ (in the computational basis) of a density matrix $\rho \in \densop$ produces the classical outcome $k\in \{0,1\}$ with probability $tr(\meas{k}{i}\rho)$. 
The measurement $m_{k,i}$, is defined by
$\frac{\meas{k}{i}\rho\meas{k}{i}^\dagger}{tr(\meas{k}{i}\rho)}$ if $tr(\meas{k}{i}\rho) \neq 0$
and $\frac{I}{2^n}$ otherwise.

\paragraph{Syntax.}
The \IMQ\ programming language is an imperative mixed classical-quantum language that extends Dijkstra's Guarded Command Language~\cite{Dijkstra75} with quantum features.
% We briefly summarise its syntax here. 
The classical component is largely standard, while the quantum aspects require more detailed attention. 
The concrete syntax is available in Appendix~\ref{app:syntax}. Partly, the \IMQ's syntax was already highlighted in the motivation example above, cf.~Figure~\ref{fig:X2}.
Usually we indicate the type of a variable: Boolean $\Bool$, integer $\NVar$, qubit $\Qubits$.
% Further, let $\K$ denote an arbitrary classical type in $\{\Bool,\NVar\}$.
% For a comprehensive description, we refer the reader to~\cite{AMPP24}. 
% The language is typed and supports standard classical operations, including common Boolean and integer expressions.
Program statements are either classical assignments, conditionals, sequences, loops, 
\emph{quantum assignments} 
$\qs^{\Qubits} <* \ope$, 
or \emph{measurements} 
$\x^{\Bool} <- \MEAS{\q^{\Qubits}}$.
% A quantum assignment consists in the application of a quantum unitary gate $\ope$ of arity $ar(\ope)$ to a sequence of qubits $\qs \triangleq \q_1,\ldots,\q_{ar(\ope)}$.
% A unitary matrix $U$ will be associated with each quantum gate $\ope$.
% A measurement performs a single-qubit measurement of $\q$ in the computational basis: the outcome is a Boolean value and the quantum state evolves accordingly. 
% For a given syntactic construct $t$, let $\Bool(t)$ (respectively $\NVar(t)$, $\Qubits(t)$) be the set of Boolean (respectively, integer, qubit) variables in $t$.
Furthermore, code is instrumented by $\pw{\CONSUME~\nexp}$ statements to model resource cost.

\paragraph{Semantics.}

\newcommand{\cst}{s}
\newcommand{\cstt}{r}
\newcommand{\qst}{\rho}
\newcommand{\tcon}[1]{#1^\dagger}
\newcommand{\cfg}[3]{(#1,#2,#3)}
\newcommand{\cfgc}[2]{(#1,#2)}
\newcommand{\halt}{\downarrow}
\newcommand{\dist}[1]{\{#1\}}
\newcommand{\dirac}[1]{\{ 1: #1 \}}

A \emph{program state} in \IMQ\ is represented by a pair $\sigma \triangleq (\cst, \rho) \in \AState$ consisting of a classical state and a quantum state.
 
The \emph{classical state} is modelled as a (well-typed) \emph{store} $s$, that is, a mapping from variables to values.
Let $\cst[\x := k]$ be the store obtained from $s$ by updating the value assigned to $\x$ in the map $s$.
Given a store $\cst$, let $\sem[s]{-}$ denote the interpretation, mapping expressions to values, defined in the obvious way.
For example $\sem[\cst]{\x} \triangleq \cst(\x)$, $\sem[\cst]{i} \triangleq i$ for $i \in \mathbb{Z}$, $\sem[\cst]{\nexp_1 + \nexp_2} \triangleq {\sem[\cst]{\nexp_1} + \sem[\cst]{\nexp_2}}$, etc.
The classical assignment $\x := \e$ updates the store $\cst$ by assigning the value of expression $\e$ to variable $\x$, denoted $\cst[\x := \sem[\cst]{\e}]$.
 
Following~\cite{AMPP24}, the \emph{quantum state} is modelled as a density operator.
The quantum assignment $\qs <* \oper{U}$ updates the quantum state $\qst$ to a new quantum state $\Phi_{U_{\qs}}(\qst) = U_{\qs} \rho U_{\qs}^\dagger$, where $U_{\qs}$ is the unitary operator  in $\matrixspace$ computed by extending the quantum gate $\oper{U}$ to the entire set of qubits $Q$.
The measurement $\x <- \MEAS{\q_i}$ updates the state depending on the two possible outcomes, $k=0$ and $k=1$: 
with probability $tr(\meas{k}{i} \qst)$, the classical state is updated by $\cst[\x := k]$ and 
the quantum state is set to $m_{k,i}(\qst)$, where $m_{k,i}(\qst)$ is the (normalised) quantum state after the measurement.

We introduce the notion of expected cost and expected value informally, for a more formal treatment we refer to~\cite{AMPPZ:LICS:22}.
The \emph{operational semantics} can be formally expressed as a weighted relation over distributions of configurations.
Let $(\cmd, \sigma)$ denote the initial configuration.
The \emph{expected cost} $\ecost_{\cmd}(\sigma)$ is the expected total cost accumulated over all possible execution paths of $\cmd$ starting from $\sigma$, weighted by their respective probabilities.
The \emph{expected value} $\evalue_{\cmd}(f)(\sigma)$ is the expected value of a given expectation function $f$ over all possible terminal states reached by executing $\cmd$ from $\sigma$, weighted by their probabilities.

% Let $\lmulti p_i : \cfg{\cmd_i}{\cst_i}{\qst_i} \rmulti \in \delta$ be a configuration in $\delta$ with probability $p_i$.
% The weight for executing this configuration is $p_i \cdot \max(0,\sem[\cst_i]{\nexp})$ if $stmt_i = \CONSUME~\nexp$, and $0$ otherwise.
% % Here $p_i$ is the probability of executing the statement---specifically, it is the probability of the configuration within the distribution that is reduced.
% The terminal configuration is denoted by $\downarrow$.
%
% \newcommand{\Alg}{\overline{\Q}}
% \newcommand{\FADists}{\Dists_{\RAlg^+}^{\text{fin}}}
%
% For automation, we consider the expected cost and the expected value with respect to an expectation function $f$.
% The \emph{expected cost} is defined as the sum of all weights collected during execution starting from an initial configuration $(\cmd, \sigma)$:
% \[
%   \textstyle \ecost_{\cmd}(\sigma) \triangleq \sup_{n \in \mathbb{N}}\{\sum_{i=0}^{n} w_i \mid \lmulti 1: (\cmd, \sigma) \rmulti \toomqw{w_0} \cdots \toomqw{w_n} \delta \}  \tpkt
% \]
% The \emph{expected value} is defined as the expectation $f$ over terminal states:
% \[
%   \textstyle \evalue_{\cmd}(f)(\sigma)\triangleq \sup_{n \in \mathbb{N}} 
%   \{\sum \{ p_i \cdot f(\sigma_i) \mid \lmulti 1: (\cmd, \sigma) \rmulti \toomqw{w_0} \cdots \toomqw{w_n} \delta, \lmulti p_i : (\downarrow, \sigma_i) \rmulti \in \delta \}  \} \tpkt
% \]

\subsection{Quantum Expectation Transformers}
\label{ss:wp}

\ms{R1: technical dense}

\begin{figure*}[t]
% \hrulefill
\begin{align*}
  \wpt{\SKIP}{f} & \triangleq f
  \\
  \wpt{\CONSUME~\nexp}{f} & \triangleq \max(\lambda\_ .\ 0,\sem{\nexp}) \mathrel{+} f
  \\
  \wpt{\x <- \e}{f} & \triangleq f[\x := \e]
  \\
  \wpt{\cmd_1 ; \cmd_2}{f} &\triangleq \wpt{\cmd_1}{\wpt{\cmd_2}{f}}
  \\
  \wpt{\IF \bexp \THEN \cmd_1 \ELSE \cmd_2}{f} & \triangleq \wpt{\cmd_1}{f}\up{\sem{\bexp}} \wpt{\cmd_2}{f}
  \\
  \wpt{\WHILE \bexp \DO \cmd}{f} &\triangleq \lfp\left(\lambda F. \wpt{\cmd}{F} \up{\sem{\bexp}} f \right)
  \\
  \wpt{\qs <* \ope }{f} &\triangleq f[\Phi_{U_{\qs}}]
  \\
  \wpt{\x <- \MEAS{\q_i}}{f}
  & \triangleq f[\x := 0\text{; }m_{0,i}] \up{\proba{0}{i}} f[\x := 1\text{; }m_{1,i}]
\end{align*}
\vspace{-2\baselineskip}
% \hrulefill
% \vspace{-2mm}
\caption{The quantum expectation transformer $\wpt{\cdot}{\cdot}$.}
\label{fig:qwpt}
\vspace{-1\baselineskip}
\end{figure*}

% We recall the notion of quantum expectation transformers, cf.~\cite{AMPPZ:LICS:22,AMPP24}.
% This forms the formal framework for automating the expected cost analysis.

% \paragraph{Expectations and Expectation Transformer.}

The \emph{quantum expectation transformer} presented by Avanzini et al.~\cite{AMPPZ:LICS:22,AMPP24} is a formal framework for reasoning about expectations of quantum programs.
This transformer is a mapping from \emph{expectations}---real-valued functions on program states---to expectations in a continuation-passing style, formally:
$\wpt{\cdot}{f} : \Cmds \to (\AState \to \Rext) \to (\AState \to \Rext)$.
% \[\wpt{\cdot}{f} : \Cmds \to (\AState \to \Rext) \to (\AState \to \Rext)\tpkt\]
The transformer is defined inductively in Figure~\ref{fig:qwpt}, and serves as a semantic model to express properties on expectations, including both expected cost and expected value.
% An \emph{expectation} $f$ is a real-valued function from states $\AState$ to the extended non-negative reals $\Rext \triangleq [0,+\infty]$.
% The quantum expectation transformer is a mapping from \emph{expectations}---real-valued functions on program states---to
% expectations in a continuation-passing style,
% \[
%   \wpt{\cdot}{f} : \Cmds \to (\AState \to \Rext) \to (\AState \to \Rext)
%   \tkom
% \]
% An \emph{expectation} is a function that assigns a real value to each program state, representing quantities of interest such as cost, probability, or other measurements. In the context of quantum programs, expectations can depend on both classical and quantum components of the state. The \emph{expectation transformer} $\wpt{\cdot}{\cdot}$ is a semantic function that, given a program statement and a post-expectation, computes the pre-expectation: it describes how the execution of the statement transforms expectations backward through the program. This approach generalises classical predicate transformers to the quantum setting, allowing reasoning about expected values and costs in programs that combine classical and quantum effects.
% The transformer is defined inductively on statements in Figure~\ref{fig:qwpt}.
 
For any expression $\e$, $\sem{\e}$ is a shorthand notation for the function $\lambda \ltuple \cst, \qst \rtuple. \sem[\cst]{\e} \in \AState \to \Rext$.
We also use $f[\x := \e]$ for the expectation $\lambda \ltuple \cst, \qst \rtuple.f\ltuple \cst[\x := \sem[\cst]{\e}], \qst \rtuple$.
For a given map $\qmap : \densop \to  \densop$,  we define $f[\qmap] \triangleq \lambda (\cst,\qst). f(\cst, \qmap(\qst) )$. 
We group such state modifications, for instance, $f[\x := \e\text{; }\qmap]$ stands for $(f[\x := \e])[\qmap]$
and $f[\x := \e,\y:=\e']$ stands for $(f[\x := \e])[\y:=\e']$.
For $p  \in \AState \to [0,1]$ and $f,g \in \AState \to \Rext$, $f \up{p} g$ denotes the
function $\lambda \sigma. p(\sigma) \cdot f(\sigma) + (1-p(\sigma)) \cdot g(\sigma) \in \AState \to \Rext$,
similar we use $f \cdot g$ to denote $\lambda \sigma. f(\sigma) \cdot g(\sigma) \in \AState \to \Rext$.
Thus, for instance, $f[\x := \x + 1] +_{\sem{\x = 1}} f$ behaves like~$f$, except that $\x$ is first incremented when applied to states
with classical variable $\x$ equal to $1$.
In correspondence to the normalisation of quantum state $m_{k,i}$, we define probabilities $\proba{k}{i} \triangleq \lambda \qst. tr(\meas{k}{i}\qst\meas{k}{i}^{\dagger})$.
We overload this function from $\densop$ to $\AState$ so that $\proba{k}{i}(\cst,\qst) = \proba{k}{i}(\qst)$.
In this way, $f[\x := 0\text{; }m_{0,i}] \up{\proba{0}{i}} f[\x := 1\text{; }m_{1,i}]$
computes the expected value of $f$ on the distribution of states obtained by measuring the $i$-th qubit and
assigning the outcome to classical variable $\x$.
In the case of loops, the least fixed point $\lfp$ is defined with respect to the pointwise ordering on the function space $\AState \to \Rext$.

\subsection{Towards Automation}

\begin{figure*}[t]
% \hrulefill
\begin{alignat*}{2}
 \quad
& \!\!\!\!\!\!\law[idents:mono]{monotonicity}   && g \geq f \Rightarrow \wpt{\cmd}{g} \geq \wpt{\cmd}{f} \\
& \!\!\!\!\!\!\law[idents:ui]{upper invariance} &~& (g \geq \sem{\bexp}{\cdot}\wpt{\cmd}{g} \wedge g \geq \sem{\neg \bexp}{\cdot} f )\! \Rightarrow \! g \geq \wpt{\WHILE \bexp \DO \cmd}{f} \\
& \!\!\!\!\!\!\law[idents:rext-sep]{separation} && \qect{\cmd}{f} = \qect{\cmd}{\lambda \sigma .\ 0} \up{} \qevt{\cmd}{f}%
\end{alignat*}%
% \hrulefill
\vspace{-2\baselineskip}
\caption{Universal laws derivable for the quantum expectation transformer.}
\label{fig:idents}
\end{figure*}

Next, we provide key concepts for automating the expected cost analysis of quantum programs.
Let $\mathsf{CostFree}(\cmd)$ replace $\pw{\CONSUME~\nexp}$ with $\pw{\SKIP}$ in $\cmd$.
The \emph{quantum expectation cost transformer} and the \emph{quantum expectation value transformer} are:
% \begin{align*}
%   \qect{\cmd}{f} & \triangleq \wpt{\cmd}{f} \\
%   \qevt{\cmd}{f} & \triangleq \wpt{\mathsf{CostFree(\cmd)}}{f} \tpkt
% \end{align*}
% GM: space, but looks strange
\begin{equation*}
  \qect{\cmd}{f} \triangleq \wpt{\cmd}{f} \quad
  \qevt{\cmd}{f} \triangleq \wpt{\mathsf{CostFree(\cmd)}}{f} \tpkt
\end{equation*}%
\begin{proposition}
  \label{t:adequacy}
For all statements $\cmd \in \Cmds$ and expectations $f : (\AState \to \Rext)$:
% GM: space  
\begin{equation*}
  \qect{\cmd}{\lambda \_ . 0}  = \ecost_{\cmd} \quad
  \qevt{\cmd}{f}  = \evalue_{\cmd}(f)
  \tpkt
\end{equation*}
\end{proposition}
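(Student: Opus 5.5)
We prove the stronger, generalised statement that for every $\cmd$, every expectation $f$ and every state $\sigma$, $\wpt{\cmd}{f}(\sigma) = \ecost_{\cmd}(\sigma) + \evalue_{\cmd}(f)(\sigma)$, where both sides are read in $\Rext$. Both claims of the proposition follow from this. Taking $f = \lambda\_.0$ makes the value term vanish and gives the first equation. For the second, note that $\mathsf{CostFree}(\cmd)$ contains no $\CONSUME$ statements, so $\ecost_{\mathsf{CostFree}(\cmd)} = 0$. Moreover, $\cmd$ and $\mathsf{CostFree}(\cmd)$ induce the same distribution over terminal states, since replacing $\CONSUME~\nexp$ by $\SKIP$ changes only the weights and not the transitions. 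Hence $\evalue_{\mathsf{CostFree}(\cmd)}(f) = \evalue_{\cmd}(f)$.

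To match the denotational side, we fix the operational semantics of~\cite{AMPPZ:LICS:22} as a weighted reduction on finite distributions of configurations. We define $\ecost$ as the supremum of the accumulated weights, and $\evalue$ as the supremum of the $f$-mass on terminal configurations. The proof of the generalised equation goes by induction on $\cmd$.

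The atomic cases are direct one-step computations:
\begin{itemize}
\item $\SKIP$ and assignments contribute no weight and produce a point distribution, which matches $f$ and $f[\x:=\e]$.
\item $\CONSUME~\nexp$ contributes weight $\max(0,\sem[s]{\nexp})$, which matches $\max(\lambda\_.0,\sem{\nexp}) + f$.
\item A quantum assignment moves $\rho$ to $\Phi_{U_{\qs}}(\rho)$.
\item A measurement branches with probabilities $\proba{k}{i}(\rho)$ into the states $(s[\x:=k], m_{k,i}(\rho))$. This gives exactly $f[\x:=0;m_{0,i}] \up{\proba{0}{i}} f[\x:=1;m_{1,i}]$. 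When a probability is $0$, the corresponding summand vanishes, so the convention $m_{k,i}(\rho)=I/2^n$ is harmless.
\end{itemize}

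The conditional case evaluates the guard deterministically and then applies the induction hypothesis.

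For sequencing, we need two facts about the operational semantics. First, linearity: $\ecost$ and $\evalue$ extend linearly, with respect to countable convex combinations, from single configurations to distributions. Second, a decomposition lemma: executing $\cmd_1;\cmd_2$ from $\sigma$ first runs $\cmd_1$ to a terminal subdistribution $\mu$ and then runs $\cmd_2$ from each state in $\mu$. This yields
\[
\ecost_{\cmd_1;\cmd_2}(\sigma) = \ecost_{\cmd_1}(\sigma) + \evalue_{\cmd_1}(\ecost_{\cmd_2})(\sigma)
\]
and the analogous equation for values. With these, $\wpt{\cmd_1}{\wpt{\cmd_2}{f}}$ unfolds by the induction hypothesis for $\cmd_2$ (applied inside) and then for $\cmd_1$.

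The main obstacle is the while loop, where $\wpt{}{}$ is a least fixed point. The first step is to show that $\wpt{\cmd}{\cdot}$ is Scott-continuous on $\AState\to\Rext$. This holds by induction on $\cmd$, since all the constructs $+$, $\up{p}$ and substitution preserve suprema of chains. By Kleene's theorem, $\wpt{\WHILE \bexp \DO \cmd}{f} = \sup_n F^n(\bot)$, where $F = \lambda G.\,\wpt{\cmd}{G} \up{\sem{\bexp}} f$.

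The second step is to prove, by induction on $n$, that $F^n(\bot)(\sigma)$ equals the cost plus value accumulated by executions that complete at most $n$ loop iterations. Runs that do not terminate within $n$ iterations contribute only their weights collected before the cut-off, and this requires handling partial body executions via the sequencing lemma. The third step is to show that these truncated quantities converge to $\ecost + \evalue$ as $n\to\infty$. This is an exchange of suprema: the $n$-unfolding truncation and the step-count truncation in the operational definition are both monotone, and each is cofinal in the other. Monotone convergence in $\Rext$ justifies the interchange.

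Care is needed because $\Rext$ includes $\infty$, so we never subtract and use only monotone limits. With that, the generalised equation holds for loops, and the proposition follows.
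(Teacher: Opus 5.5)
The paper contains no proof of this proposition. It is imported from \cite{AMPPZ:LICS:22}, together with the laws of Figure~\ref{fig:idents}, so there is no in-paper argument to compare against; your proposal is a self-contained reconstruction, and its outline is sound.

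Your key choice is to prove the combined identity $\wpt{\cmd}{f} = \ecost_{\cmd} + \evalue_{\cmd}(f)$, which is adequacy and the separation law of Figure~\ref{fig:idents} in one statement. For a structural induction this strengthening is forced, not just convenient. In the sequencing case of the cost claim, the hypothesis for $\cmd_1$ must be applied to the non-zero continuation $\wpt{\cmd_2}{\lambda\_ .\ 0} = \ecost_{\cmd_2}$, so the cost equation on its own is not inductive. The paper, by citing, gains brevity and takes separation as an independently established law. Your route gains self-containment and gets separation for free, but you must state and prove the operational decomposition lemma for sequencing yourself.

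Two steps should be made precise when you write it out.

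First, for the loop, define the $n$-truncated operational quantity syntactically, e.g.\ $\mathrm{unroll}_0 \triangleq \WHILE \true \DO \SKIP$ and $\mathrm{unroll}_{n+1} \triangleq \IF \bexp \THEN (\cmd;\mathrm{unroll}_n) \ELSE \SKIP$. Unfolding the transformer gives $F^n(\bot) = \wpt{\mathrm{unroll}_n}{f}$. This also fixes the off-by-one in your description: $F^n(\bot)$ charges $n$ body executions but credits the $f$-value only to exits after at most $n-1$ iterations. Your second step then becomes an inner induction on $n$ that reuses the already-proved conditional and sequencing cases. Your third step becomes a simulation argument: whatever weight or terminal mass the loop collects within $k$ steps is collected by $\mathrm{unroll}_k$, and everything $\mathrm{unroll}_n$ collects is also collected by the loop.

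Second, the sequencing case uses additivity and monotone convergence of $\evalue_{\cmd_1}(\cdot)$ in its function argument, not only linearity in the initial distribution. The same fact gives Scott continuity of the body transformer directly from the induction hypothesis, because $\evalue_{\cmd}(g)(\sigma) = \sum_{\tau}\mu(\tau)\, g(\tau)$ is a countable non-negative sum. This makes your separate continuity induction unnecessary.
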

Proposition~\ref{t:adequacy} has been stated and proved originally in~\cite{AMPPZ:LICS:22}.
Crucially, it is enough to focus on the quantum expectation transformer to reason about both the expected cost and the expected value of quantum programs.
Figure~\ref{fig:idents} depicts a couple of useful laws, which have been stated and proved in~\cite{AMPPZ:LICS:22}.
We denote by $\geq$ also the pointwise extension of the order from $\Rext$ to functions, that is,
$f \geq g$ holds iff $\forall \sigma \in \AState,\ f(\sigma) \geq g(\sigma)$.
The \ref{idents:mono} Law permits us to reason modulo upper bounds---actual expectations can be always substituted by upper bounds.
The \ref{idents:ui} Law constitutes a generalisation of the notion of invariant stemming from Hoare calculus.
It is used to find closed-form upper bounds $g$ to expectations $f$ of loops and is central to our approach for automating the cost analysis of quantum programs.
The \ref{idents:rext-sep} Law allows us to decompose the expected cost analysis into an expected cost and an expected value analysis.
This modularity is crucial for automation, especially when dealing with complex loop structures and larger quantum programs. 
We emphasise that the composability of the semantic expectation transformer (see Figure~\ref{fig:qwpt}) is essential here.

% We comment on the choice of quantum state representation:
% Quantum states can be represented in different ways, most notably as state vectors (pure states) or as density operators (pure and mixed states, i.e.\xspace a probabilistic mixture of pure states). 
% The transformer framework is valid for both representations.
% In this work, we apply the density operator formalism.
% \cite{qpl} motivates the density operator representation in his seminal work on quantum programming languages: 
% Density operators
% \begin{inparaenum}[(i)]
% \item frequently allow for simpler mathematical expressions,
% \item have a unique representation of pure states, and
% \item provide an efficient representation of mixed states.
% \end{inparaenum}
%
% For the automation, we perform symbolic evaluation to derive input-output relations.
% Our initial observations confirmed remark (i), and ambiguity in (ii) may introduce unnecessary complexity when working with symbolic expressions.
% Therefore, we have chosen to use density operators.
% Notice that the program semantics is defined via distributions over density operators, rather than using a mixed state representation (cf.\xspace \cite{qpl,Perdrix08b}).
% In the automation, however, we consciously refrain from enforcing the purity of density operators.
% % \GM{correct?}
% % \MS{yes}
% For automation, we need to strike a balance between expressiveness and complexity.
% As we focus on upper bounds, it is safe to over-approximate reachable states and relax the constraints on the state representation.
%

In this work, we apply and abstract the density operator formalism.
Selinger \cite{qpl} motivates the density operator representation in his seminal work on quantum programming languages: 
% Density operators
% \begin{inparaenum}[(i)]
(i) frequently allows for simpler mathematical expressions,
(ii) has a unique representation of pure states, and
(iii) provides an efficient representation of mixed states.
% \end{inparaenum}
%
For automation, we introduce a syntactic expectation transformer (see Figure~\ref{fig:symbolic-inference}).
Our initial observations confirmed remark (i), and ambiguity in (ii) may introduce unnecessary complexity when working with symbolic expressions.
Therefore, we have chosen to use density operators.
% Observe, that the symbolic state is defined as partial mapping instead of a matrix.
% This provides a simple abstraction heuristic driven optimisation by restricting the state to the variables of interest.
However, in line with~\cite{AMPPZ:LICS:22,AMPP24}, semantics are defined via distributions over density operators rather than mixed state representations.

\section{Automation}
\label{s:automation}

\newcommand{\astin}{\symbolic{\astate_{in}}} % inital symbolic state
\newcommand{\stin}{\symbolic{\rho_{in}}} % inital symbolic quantum state

In this section, we present the main contribution of this work and discuss the implementation of \TOOL, an automated expected cost analysis tool for mixed classical-quantum programs written in \IMQ.
%
% This approach is inspired by related work on automated cost analysis of probabilistic programs~\cite{AMS20,AMS23} 
% and extends the quantum expectation transformer of \cite{AMPPZ:LICS:22,AMPP24} to a syntactic variant that is suitable for automation.
%
Refer back to Figure~\ref{fig:workflow} for a schematic overview.
For automation, we introduce a syntactic variant 
$\qinfer{\cmd}{f}$ (see Figure~\ref{fig:symbolic-inference}) of the expectation transformer 
$\wpt{\cmd}{f}$ (see Figure~\ref{fig:qwpt}) to infer constraints on symbolic expressions.
Crucially, its interpretation forms an upper bound $\sem{\qinfer{\cmd}{f}} \geq \wpt{\cmd}{\sem{f}}$.
%
% We fix a syntactical category for bounding functions, so-called \emph{cost expressions}, and more generally for \emph{expectations}, that is, real-valued functions on program states.
For loop-free code, $\qinfer{\cmd}{f}$ performs symbolic evaluation on a symbolic state representation to form input-output relations.
For while statements, we apply law~\ref{idents:ui} to infer term constraints:
    \begin{equation*}
\bexp \vdash G \geq \qinfer{\WHILE \bexp \DO \cmd}{G} \qquad
\neg \bexp \vdash G \geq F \tpkt
    \end{equation*}
%     \begin{align*}
% \bexp & \vdash G \geq \qinfer{\WHILE \bexp \DO \cmd}{G} \\
% \neg \bexp & \vdash G \geq F
%     \end{align*}
Here $G$ and $F$ represent unknown expectations.
To resolve the inequalities resulting from while statements, we provide a set of inference rules to reduce inequality constraints over expectations to inequality constraints over cost expressions (see Subsection~\ref{ss:inference-cost-constraints}).
These constraints are then further reduced to inequality constraints over polynomial expressions with unknown coefficients (see Subsection~\ref{ss:inference-polynomial-constraints}).
Finally, we use known approaches to synthesise certificates for the polynomial constraints, resulting in upper invariants for while statements (see Subsection~\ref{ss:inference-certificate-constraints}).
Given that, a bottom-up strategy can be implemented to infer upper invariants for loops and propagate the result (see Subsection~\ref{ss:algorithm}).

\subsection{Symbolic Representation}

In this subsection, we introduce the symbolic representation that underpins our approach to the automated inference. 

% GM: paragraphs
% \subsubsection{Symbolic Representation of Quantum States.}
\paragraph{Quantum states.}
% In preparation of the symbolic inference of upper invariant constraints, we introduce a symbolic representation of the quantum state.
% We aim for a uniform representation of the classical and quantum part of the program state.
% Informally, consider representing a density matrix as a mapping from variables to individual elements of the matrix and representing matrix operations as parallel updates.

Recall that an expectation is a function from program states $(\cst, \rho)$ to non-negative real numbers.
To ease the representation of expectations within the automation, we model complex numbers as pairs $(a,b)$ of real numbers, where $a$ is the real part and $b$ is the imaginary part.
%
% Consider a quantum state $\rho$ represented as a density matrix:
% \[
% \rho = \begin{pmatrix}
% (a_{11},b_{11}) & (a_{12}, b_{12}) & \cdots & (a_{1n},b_{1n}) \\
% (a_{21},b_{21}) & (a_{22}, b_{22}) & \cdots & (a_{2n}, b_{2n}) \\
% \vdots & \vdots & \ddots & \vdots \\
% (a_{n1},b_{n1}) & (a_{n2}, b_{n2}) & \cdots & (a_{nn},b_{nn})
% \end{pmatrix}
% \]
Since $\rho$ is a density matrix, the matrix is Hermitian, that is, $\rho = \rho^\dagger$.
The diagonal elements are constrained by $a_{ii} \in \R$, $a_{ii} \geqslant 0$,  $\sum_{i=1}^n a_{ii} = 1$ and $b_{ii} = 0$.
The off-diagonal elements are constrained by $a_{ij} \in \R$ and $b_{ij} \in \R$, $a_{ij} = a_{ji}$ and $b_{ij} = -b_{ji}$, for $i \neq j$.
Due to the symmetry we restrict off-diagonal elements to $a_{ij}$ and $b_{ij}$ with $i < j$.
Given a program acting on the Hilbert space $\CPLX^{m} \times \CPLX^{m}$ for $m = 2^n$, we fix the set of \emph{density matrix variables} by 
$d_i$ for diagonal elements and $a_{ij}, b_{ij}$ for $1 \leqslant i < j \leqslant m$ for off-diagonal elements.
In what follows, let $\stin$ denote the symbolic density matrix:
\ms{rename to $\stin$ since index is later used for branching @MS check consistency}
\[
\stin \triangleq \begin{pmatrix}
(d_{1},0) & (a_{12}, b_{12}) & \cdots & (a_{1m},b_{1m}) \\
(a_{12},-b_{12}) & (d_{2},0) & \cdots & (a_{2m}, b_{2m}) \\
\vdots & \vdots & \ddots & \vdots \\
(a_{1m},-b_{1m}) & (a_{2m}, -b_{2m}) & \cdots & (d_{m},0)
\end{pmatrix}
\]

% Notice that the program semantics is defined via distributions over density operators, rather than using a mixed state representation (cf.\xspace \cite{qpl,Perdrix08b}).
% In the automation, however, we consciously refrain from enforcing the purity of density operators.
% % \GM{correct?}
% % \MS{yes}
% For automation, we need to strike a balance between expressiveness and complexity.
% As we focus on upper bounds, it is safe to over-approximate reachable states and relax the constraints on the state representation.

% \subsubsection{Symbolic Representation of Arithmetic Expressions.}
\paragraph{Arithmetic expressions.}

We restrict the set of applicable gates to the (approximative) universal set of \cliffordt\ gates, cf.~\cite{MikeAndIke}.
Further, we include gates that can be derived such as $\pw{CCNOT}$.
This restriction is not a limitation of the transformer framework, but rather a design choice to restrict the form of symbolic expressions.
The most complex gate in terms of representation is the $\pw{T}$ gate:
\[
    \pw{T} \triangleq 
    \begin{pmatrix}
        1 & 0 \\
        0 & \frac{1+i}{\sqrt{2}}
    \end{pmatrix}
    \triangleq 
    \begin{pmatrix}
        (1,0) & (0,0) \\
        (0,0) & (\frac{1}{\sqrt{2}}, \frac{1}{\sqrt{2}})
    \end{pmatrix}
\]
Taking matrix multiplication and state normalisation into account,
we represent arithmetic expressions, denoted as~$\AExp$, over the set of matrix and integer variables by:
% \[
%   \sfrac{e}{f}\text{\;, $e$ and $f$ are polynomial expressions with coefficients in $\Z[\sqrt{2}]$}
%   \tpkt
% \]
$
  \sfrac{e}{f}\text{\;, $e$ and $f$ are polynomial expressions with coefficients in $\Z[\sqrt{2}]$}
$.
% Clearly the denominator should never evaluate to zero for any assignment.
% We comment on the well-definedness in Subsection~\ref{ss:implementation}, when discussing details of the implementation.

% \subsubsection{Symbolic Representation Quantum Operations.}
\paragraph{Quantum operations.}

% Let $\QVARS$ denote the set of density matrix variables of a quantum program.
Next, we provide a symbolic representation of quantum operations as parallel updates.
% Informally, consider introducing variables $m_{ij}$ for each each matrix entry. 
% A matrix multiplication can then be represented as parallel assignment $m_{ij} = (M_1 M_2)_{ij}$.
Let $f$ denote an operator from density matrices to density matrices $\rho' = f(\rho)$. 
In particular, set $f(\rho) = U \rho U^\dagger$ for any unitary $U$ and 
$f(\rho) = m_{k,i}(\rho)$ for any post-measurement operator $m_{k,i}$.
We define the application of $f$ as an element-wise mapping from matrix variables to the upper triangle elements of $\symbolic{\rho}' = f(\symbolic{\rho})$, 
such that $d_i$ maps to $\mathsf{Real}(\symbolic{\rho_{ii}}')$, $a_{ij}$ maps to $\mathsf{Real}(\symbolic{\rho_{ij}}')$ and $b_{ij}$ maps to $\mathsf{Imag}(\symbolic{\rho_{ij}}')$. 
For the rest of the paper we denote the update mapping induced by $\symbolic{\rho}' = f(\symbolic{\rho})$ as $\{ \qs \mapsto f(\symbolic{\rho})\}$.

\begin{example} 
    % \begin{gather*}
    % \symbolic{\rho_0}=\begin{pmatrix}(d_1,0)&(a_{12},b_{12})\\(a_{12},-b_{12})&(d_2,0)\end{pmatrix},\qquad
    % H=\tfrac{1}{\sqrt{2}}\begin{pmatrix}(1,0)&(1,0)\\(1,0)&(-1,0)\end{pmatrix}\\
    % H\symbolic{\rho_0}H^\dagger=\tfrac{1}{2}\begin{pmatrix}(d_1+d_2+2a_{12},0)&(d_1-d_2,-2b_{12})\\(d_1-d_2,2b_{12})&(d_1+d_2-2a_{12},0)\end{pmatrix}
    % \end{gather*}
    % The corresponding parallel update is compactly written as
    % \[
    % d_1\mapsto\tfrac{1}{2}(d_1+d_2+2a_{12}),\quad
    % d_2\mapsto\tfrac{1}{2}(d_1+d_2-2a_{12}),\quad
    % a_{12}\mapsto\tfrac{1}{2}(d_1-d_2),\quad
    % b_{12}\mapsto -b_{12}.
    % \]
    Consider the application of the Hadamard gate $\pw{H}$ to a one-qubit system with symbolic density matrix $\stin$.
    \begin{gather*}
        \stin \triangleq \begin{pmatrix}
            (d_1,0) & (a_{12}, b_{12}) \\
            (a_{12}, -b_{12}) & (d_2,0)
        \end{pmatrix}  \qquad 
        \pw{H} \triangleq \frac{1}{\sqrt{2}}\begin{pmatrix}
            (1,0) & (1,0) \\
            (1,0) & (-1,0)
        \end{pmatrix} \\
        H \stin H^\dagger = 
        \frac{1}{2} \begin{pmatrix}
        (d_1 + d_2 + 2a_{12},0) & (d_1 - d_2, -2b_{12}) \\
        (d_1 - d_2, 2b_{12}) & (d_1 + d_2 - 2a_{12},0)
        \end{pmatrix} 
    \end{gather*}
    The inferred update mapping $\{\qs \mapsto H \stin H^\dagger\}$ of this matrix operation is:
    % \begin{align*}
    %     d_1 & \mapsto \frac{1}{2}(d_1 + d_2 + 2a_{12}) &
    %     d_2 & \mapsto \frac{1}{2}(d_1 + d_2 - 2a_{12})  \\
    %     a_{12} & \mapsto \frac{1}{2}(d_1 - d_2) &
    %     b_{12} & \mapsto \frac{1}{2}(-2b_{12})
    % \end{align*}
    \[
    d_1\mapsto\tfrac{1}{2}(d_1+d_2+2a_{12}) \quad
    d_2\mapsto\tfrac{1}{2}(d_1+d_2-2a_{12}) \quad
    a_{12}\mapsto\tfrac{1}{2}(d_1-d_2) \quad
    b_{12}\mapsto -b_{12}
    \]
\end{example}

% \subsection{Term Representation}

% This subsection provides a syntactical representation of expectations.
% We typically present substitutions as mapping $\{ x_1 \mapsto t_1, \ldots, x_n \mapsto t_n \}$.

% GM: subsubsection -> paragraph
\paragraph{Symbolic State.}
In the constraint inference, symbolic states are used to generate input-output relations between control flow locations.
Let $\SVARS$ denote the set of program variables and density matrix variables.
Adapting the notion of quantum program states of Section~\ref{s:preliminaries},
we define a symbolic state $\symbolic{\astate}$ as a partial store 
$\symbolic{\astate} \ofdom \SVARS \rightharpoonup \SExp$ 
from variables to expressions.
% This includes mapping Boolean variables $\BVARS$ to $\BExp$, numeric variables $\NVARS$ to $\NExp$ and density matrix variables $\QVARS$ to $\AExp$.
%
Let $\symbolic{\astate} \in \symbolic{\AState}$ denote a symbolic state.
We define the substitution $\symbolic{\astate}\{ \x \mapsto \e \}$ as the symbolic state obtained by substituting every occurrence of $\x$ in expressions of $\symbolic{\astate}$ with $\e$. 
Formally, for all variables $y \in dom(\symbolic{\astate})$,
$
\big(\symbolic{\astate}\{ \x \mapsto \e \} \big)(y) \triangleq \symbolic{\astate}(y)\{\x \mapsto \e \}
$.
%
% This mapping is partial, which is an implementation detail for automation that provides a simple abstraction mechanism by restricting the state to the variables of interest.
% For example, a symbolic store with domain $d_1, \ldots, d_{m}$ only infers the input output relation of the diagonal elements of the density matrix.
%
Let $\SVARS' \subseteq \SVARS$ be a subset of variables.
The initial symbolic state $\astin$ maps every variable $\x{} \in \SVARS'$ to itself.

\TOOL\ implements a heuristic-driven approach in which often only a subset of variables is tracked in symbolic states, such as, a small set of diagonal/off-diagonal entries.
This greatly reduces expression size and improves scalability.

\paragraph{Context.}
The inference system generates constraints of the form $\phi \vdash e \geq f$.
Here $\phi$ is a Boolean formula that reflects refinements due to control flow and assumptions in expectation expressions.
%
% We extend $\BExp$ to include inequalities over arithmetic expressions in $\AExp$ and statements about probabilities.
This includes inequalities over arithmetic expressions and statements about probabilities.
The latter is used for case distinction in post-measurement operations.
% More formally, $\phi$ is a Boolean expression over atoms:
% \begin{align*}
%     & \bvar \in \BVARS 
%     \quad \nexp_1, \nexp_2 \in \NExp
%     \quad \aexp, \aexp_1, \aexp_2 \in \AExp
%     \\
%     & 
%     \bot \mid
%     \top \mid
%     \bvar \mid
%     \nexp_1  \igeq \nexp_2 \mid
%     \aexp_1 \rgeq  \aexp_2 \mid
%     \aexp_1 \rgt   \aexp_2 \mid
%     \isProb{\aexp} \mid
%     \isProbOne{\aexp}
% \end{align*}
% We distinct between inequalities over classical program arithmetic expressions in $\BExp$, which arise for example from loop guards, and inequalities over arbitrary arithmetic expression in $\AExp$, which for example arise due to assumptions in cost expressions.
% We typically write
% $\nexp_1 \igt \nexp_1$ for $\nexp_1 \igeq \nexp_2 - 1$ and
% $\nexp_1 \ieq \nexp_1$ for $\nexp_1 \igeq \nexp_2 \wedge \nexp_2 \igeq \nexp_1$.
% Similarly, we write
% $\aexp_1 \req \aexp_2$ for $\aexp_1 \rgeq \aexp_2 \wedge \aexp_2 \rgeq \aexp_1$.
% Further, 
% $\neg (\aexp_1 \igeq \aexp_2) \triangleq \aexp_1 \igt \aexp_2$ and
% $\neg (\aexp_1 \rgeq \aexp_2) \triangleq \aexp_2 \rgt \aexp_1$.
%
% For the post-measurement operation the quantum state is normalised and elements of the density matrix are divided by its probability.
% For well-definedness we perform a case distinction on the probability expression.
The probability expression $\aexp \in \AExp$
% evaluates to zero $\isProbZero{\aexp}$, 
evaluates to one $\isProbOne{\aexp}$, or strictly between zero and one $\isProb{\aexp}$.
Furthermore, we set $\neg \isProbOne{\aexp} = \neg \isProb{\aexp} = \bot$.

To decide whether $\phi \vdash e \geq f$ holds we make use of additional assumptions that arise from the fact that a quantum state is a density matrix.
% GM repetition
% In particular all diagonal elements are non-negative real numbers that sum up to one,
% and all off-diagonal elements are real numbers within the range $[-1,1]$.
We typically keep these constraints implicit if not important.
Further constraints could be imposed.
For instance, $\rho = \rho^2$ restricts to pure quantum states.
However, we deliberately relax the constraints imposed on quantum states for automation.

\paragraph{Cost Expressions.}
% As an intermediate representation the inference system generates constraints of the form $\somecostconstraint$.
A cost expectation is a function from program states to non-negative real numbers.
For automation, we introduce a class of bounding functions, dubbed \emph{cost expressions} (cf.~\cite{pubs,AMS20,AMS23}).
Let $\mathsf{p}_i$ (probability) and $\mathsf{w}_i$ (weight) denote arithmetic expressions that evaluate into the non-negative reals.
We introduce cost expressions syntactically as linear combination of norms:
\[ \textstyle \Norm \ni \normexp \defsym \guarded{\phi} \cdot \aexp \qquad \CExp \ni \cexp \defsym \sum_i \mathsf{p}_i \cdot \mathsf{w}_i \cdot \normexp_i \]
% A norm is an arithmetic expression that evaluates to a non-negative number and can be conceived as an abstraction of the program state.
A norm is a guarded arithmetic expression $\guarded{\phi} \cdot \aexp$ that evaluates to $\aexp$ if $\phi$ holds, and $0$ otherwise.
We require that $\aexp$ is always non-negative in the given context.
In \TOOL, norms are chosen via heuristics and have the form $\guarded{\aexp \rgeq 0} \cdot \aexp \triangleq \mathsf{max}(\aexp,0)$.
Cost expressions are closed under addition and scalar multiplication with expressions that evaluate to a non-negative number (like probabilities).
We set 
$\guarded{\phi} \cdot (\sum_i \mathsf{p}_i \cdot \mathsf{w}_i \cdot ( \guarded{\phi_i} \cdot \aexp_i)) \triangleq \sum_i \mathsf{p}_i \cdot \mathsf{w}_i \cdot (\guarded{\phi \wedge \phi_i} \cdot \aexp_i)$.
The interpretation of cost expressions 
$\sem{\,\mathbin{\cdot}\,} \ofdom \CExp \to \AState \to \Rext$ 
is the obvious one.
%
% Next, we provide the syntactical representation of expectations.
\begin{figure*}[t]
  % \hrulefill
  \[
    \begin{array}{r@{\ }c@{\ }l r}
      \qinfer{\SKIP}{\texp}                                     & = & \texp \\[1mm]
      \qinfer{\CONSUME~\nexp}{\texp}                            & = & \TCONSUME~\nexp~\texp\\[1mm]
      \qinfer{\x{} <- \e}{\texp}                                  & = & \texp\{\x{} \mapsto \e \}\\[1mm]
      \qinfer{\cmd_1 ; \cmd_2}{\texp}                           & = & \qinfer{\cmd_1}{\qinfer{\cmd_2}{\texp}}\\[1mm]
     \qinfer{\begin{array}{@{}l@{}}\IF \bexp\\ \THEN \cmd_1\\ \ELSE \cmd_2\end{array}}{\texp} & = & \TCOND~\bexp\phantom{x}\qinfer{\cmd_1}{\texp}\phantom{x}\qinfer{\cmd_2}{\texp}\\[1mm]
      \qinfer{\qs <* \ope }{\texp}           & = & \texp\{\qs \mapsto U\stin U^\dagger\}\\[1mm]
      \qinfer{\x <- \MEAS{\q_i}}{\texp} & = & 
        \TMEASURE~
         (p_{0,i}(\stin),\texp\{\x \mapsto 0; \qs \mapsto m_{0,i}(\stin)\}) \\
        && \phantom{xxxx}\; (p_{1,i}(\stin),\texp\{\x \mapsto 1; \qs \mapsto m_{1,i}(\stin)\}) \\[1mm]
     \qinfer{\WHILE \bexp \DO \cmd}{\texp}        & = & \funzero,  ~ \text{with side-conditions}\\
     && \phantom{xx}\CC{\begin{cases} \phantom{\neg}\bexp \vdash \funzero \geq \qinfer{\cmd}{\funzero} \\ \neg \bexp \vdash \funzero \geq \texp\end{cases}}\\[1mm]
    \end{array}
  \]
  % \vspace{-2mm}
  % \hrulefill
  % \vspace{-2mm}
  \vspace{-\belowdisplayskip}
\caption{Term representations of $\qinfer{\cdot}{\cdot}$ and their corresponding side-conditions.}
\label{fig:symbolic-inference}
\end{figure*}

\paragraph{Expectation Terms.}

The term structure and the term interpretation are chosen such that they reflect the operations of the expectation transformer.
Further, terms include unknown expectations and cost expressions.
\begin{align*}
    \TExp \ni \texp_1, \texp_2 & \defsym 
    \TFUN~\funsym~\symbolic{\astate}
    \plusT \TCOST~\cexp 
    \plusT \TCONSUME~\nexp~\texp
    \plusT \TCOND~\bexp~\texp_1~\texp_2 
    \plusT \TMEASURE~(\aexp_1, \texp_1)~(\aexp_2, \texp_2) 
\end{align*}
The term $\TFUN~\funsym~\symbolic{\astate}$ represents an unknown expectation for some while command with location $\ell$.
This is used to form the upper invariant constraints for while statements.
All other terms reflect different operations of the expectation transformer.
For example $\TMEASURE~(\aexp_1, \texp_1)~(\aexp_2, \texp_2)$ indicates a measurement operation with two outcomes. 
With probability $\aexp_1$ ($\aexp_2$) the classical outcome of the measurement is $0$ ($1$) and $\texp_1$ ($\texp_2$) is the resulting expectation taking the respective post-measurement state into account.
% The case in which the probability of $\aexp_1$ or $\aexp_2$ is zero is handled by cased distinction.
%
% In \cite{AMS23} constraints are directly formed over cost expressions. This makes the representation more succinct.
% Here the term structure is used as an intermediate representation, which we consider to be useful for automation as it simplifies necessary case distinctions in measurements and allows for some simplifications.
%
Let $\alpha \ofdom \Funsym \to \symbolic{\AState} \to \CExp$ be an assignment of cost expectation expressions to function symbols $\Funsym$.
The interpretation of expectation terms under $\alpha$ is defined in the obvious way.
For example 
$\sem[\alpha]{\TFUN~\funsym~\symbolic{\astate}} = \sem[]{ \alpha(\funsym)(\symbolic{\astate})}$,
$\sem[\alpha]{\TCOND~\bexp~\texp_1~\texp_2} = \sem[\alpha]{\texp_1}(\astate)$ if $\sem{\bexp}(\astate)$ and $\sem[\alpha]{\texp_2}(\astate)$ otherwise, etc.

In \cite{AMS23} constraints are directly formed over cost expressions. This makes the representation more succinct.
In \TOOL\ the term structure is used as an intermediate representation, which is useful for automation as it simplifies tracking necessary case distinctions in measurements and allows for useful simplifications.
For brevity, we omit these simplifications here and refer to Appendix~\ref{app:term-simp}.

\subsection{Inference of Expectation Term Constraints}
\label{ss:inference-term-constraints}

The syntactic variant of the expectation transformer is depicted in Figure~\ref{fig:symbolic-inference}.
For while statements an unknown expectation $\funzero$ is introduced.
The location symbol $\ell$ is a fresh symbol that represents the location of the loop statement.
The initial symbolic state $\astin$ is the identity mapping for some subset of variables $\SVARS' \subseteq \SVARS$ and represents program states before evaluating the loop condition.
Further, side-conditions are introduced to form the constraints for the upper invariant.
For measurements both outcomes are considered.
The term $p_{k,i}(\stin)$ for $k \in \{0,1\}$ is the probability expression obtained by $p_{k,i}$ applied to the symbolic state $\stin$.
% GM rewritten
% While $\texp\{\x \mapsto k; \qs \mapsto m_{k,i}(\symbolic{\rho_0})\}$ performs the update to represent the state after the measurement.
On the other hand, $\texp\{\x \mapsto k; \qs \mapsto m_{k,i}(\stin)\}$ performs the update to represent the state after the measurement.
Similarly, $\texp\{\qs \mapsto U\stin U^\dagger\}$ performs the update to take the unitary operation into account.
We arrive at our main theoretical contribution, the soundness of the
inference algorithm with the proof in the Appendix~\ref{p:soundness}. In conjunction with Proposition~\ref{t:adequacy} this implies the soundness of
the expected cost analysis provided.
%
% \begin{theorem}[Soundness Theorem]
%   \label{t:soundness}
%   If all side-conditions in Figure~\ref{fig:symbolic-inference} are met, then for all programs
%   $\cmd$, $\lambda f. \qet{\cmd}{f} \leqslant \lambda f. \sem{\qinfer{\cmd}{f}}$,
%   that is the inference algorithm is sound.
% \end{theorem}
%
\begin{theorem}[Soundness Theorem]
  \label{t:soundness}
  Let $\cmd$ be a program, $f$ be a cost expression, and $\alpha$ be an assignment.
  Then 
  $\sem[\alpha]{\qinfer{\cmd}{f}} \geq \qet{\cmd}{\sem[\alpha]{f}}$, if all side-conditions are met.
\end{theorem}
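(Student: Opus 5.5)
We prove the statement by structural induction on $\cmd$, simultaneously for all term expectations $\texp$ (we write $\texp$ for $f$, since the transformer in Figure~\ref{fig:symbolic-inference} is defined on arbitrary terms). The assignment $\alpha$ stays fixed throughout.

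First we prove a \emph{substitution lemma}, which is the only genuinely new ingredient. For every term $\texp$, every symbolic update $\{\x \mapsto \e\}$ and every quantum update $\{\qs \mapsto \qmap(\stin)\}$ with $\qmap \in \{\Phi_{U_{\qs}}, m_{k,i}\}$, we show
\[
  \sem[\alpha]{\texp\{\x \mapsto \e\}} = \sem[\alpha]{\texp}[\x := \e]
  \quad\text{and}\quad
  \sem[\alpha]{\texp\{\qs \mapsto \qmap(\stin)\}} = \sem[\alpha]{\texp}[\qmap].
\]
The proof is by induction on $\texp$.
\begin{itemize}
\item For $\TFUN~\funsym~\symbolic{\astate}$, substitution composes with the symbolic store $\symbolic{\astate}$. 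We therefore need that $\alpha(\funsym)$ is a cost expression whose interpretation commutes with substitution of its variables, which holds because cost expressions are built from arithmetic expressions over $\SVARS$.
\item For the quantum update, the key fact is that the parallel update of the density matrix variables $d_i, a_{ij}, b_{ij}$ by the entries of $\qmap(\stin)$ denotes exactly the matrix $\qmap(\rho)$. This follows from the Hermitian parametrisation of $\stin$, and it also covers the Boolean guards in norms.
\item For measurements, $m_{k,i}$ involves division by $p_{k,i}$. We handle this via the context case split $\isProb{\cdot}$ versus $\isProbOne{\cdot}$: in the degenerate case the branch has weight $0$ in $\up{p}$, so the choice $I/2^n$ is irrelevant.
\item The remaining term constructors $\TCONSUME$, $\TCOND$ and $\TMEASURE$ follow by a straightforward pushing of substitutions through.
\end{itemize}

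With the lemma in hand, the loop-free cases of the main induction are routine.
\begin{itemize}
\item $\SKIP$ and $\CONSUME$ match Figure~\ref{fig:qwpt} by the definition of $\sem[\alpha]{\cdot}$ on $\TCONSUME$, which is interpreted as $\max(0,\sem{\nexp}) + \cdot$.
\item Assignments, unitaries and measurements are exact equalities by the substitution lemma. The $\TMEASURE$ term is interpreted as $\cdot \up{p_{0,i}} \cdot$, using $p_{1,i} = 1 - p_{0,i}$.
\item For sequencing, the induction hypothesis gives $\sem[\alpha]{\qinfer{\cmd_2}{\texp}} \geq \qet{\cmd_2}{\sem[\alpha]{\texp}}$. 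We then need the induction hypothesis for $\cmd_1$ on the term $\qinfer{\cmd_2}{\texp}$, followed by the \ref{idents:mono} Law:
\[
  \sem[\alpha]{\qinfer{\cmd_1}{\qinfer{\cmd_2}{\texp}}} \geq \qet{\cmd_1}{\sem[\alpha]{\qinfer{\cmd_2}{\texp}}} \geq \qet{\cmd_1}{\qet{\cmd_2}{\sem[\alpha]{\texp}}}.
\]
This is why the statement must be proved for arbitrary terms and not just cost expressions.
\item Conditionals follow pointwise from the two induction hypotheses.
\end{itemize}

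For the while case, write $g = \sem[\alpha]{\funzero}$, which is $\alpha(\funsym)$ evaluated on the identity store, i.e.\ a genuine expectation. The side-conditions hold under $\alpha$, meaning that for all states satisfying $\bexp$ we have $g \geq \sem[\alpha]{\qinfer{\cmd}{\funzero}}$, and for all states satisfying $\neg\bexp$ we have $g \geq \sem[\alpha]{\texp}$. Applying the induction hypothesis to the body $\cmd$ with term $\funzero$ gives $\sem[\bexp]\cdot g \geq \sem[\bexp]\cdot \qet{\cmd}{g}$ and $\sem[\neg\bexp]\cdot g \geq \sem[\neg\bexp]\cdot\sem[\alpha]{\texp}$. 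The \ref{idents:ui} Law then yields $g \geq \qet{\WHILE \bexp \DO \cmd}{\sem[\alpha]{\texp}}$, as required.

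We expect the main obstacle to be the quantum half of the substitution lemma, specifically the normalised post-measurement update. Here one must show that evaluating the rational expressions in $\AExp$ at $\rho$ agrees with $m_{k,i}(\rho)$ whenever the probability is positive, and that the context conditions $\isProb{\cdot}$ and $\isProbOne{\cdot}$ exhaust all cases without affecting the value of $\up{p}$. A secondary subtlety is that side-conditions are only checked over the relaxed, tracked state space. We will argue that this is harmless: validity over a superset of density matrices, or with untracked variables left as identity, implies validity on the actual states.
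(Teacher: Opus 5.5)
Your proposal is correct and follows the same route as the paper's proof: structural induction on the program, with exact equalities for skip, tick, assignments, unitaries and measurements (the latter via a case split on the probability), the induction hypothesis plus monotonicity for sequencing, and the induction hypothesis on the body together with the upper invariance law and the side-conditions for while loops. Your explicit substitution lemma, your strengthening of the claim to arbitrary terms (which the paper's sequencing case tacitly needs, since it applies the hypothesis to $\qinfer{\cmd_2}{f}$), and your explicit use of the induction hypothesis on the loop body make precise steps that the paper leaves implicit.
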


\newcommand{\GSA}{\TFUN~\ell~\astin}
\newcommand{\GSB}[1][]{\TFUN~\ell~\symbolic{\astate_{#1}}'}
\newcommand{\PSYM}[1]{\symbolic{p_{#1}}}

\begin{example}
    \label{ex:X2a}
    \ms{rewritten}
    We continue with the running example (Figure~\ref{fig:X2}), inspecting the while loop of the ${-}X$ program.
    Assume that the cost of exiting the loop is zero.
    For brevity, let $\oper{U}$ be the composition of $\pw{\q_1 <* X; \q_1 <* H; \q_1,\q_2 <* CNOT}$.
    Then $\qinfer{\mbox{$WHILE{-}X$}}{\TCOST~0} = \GSA$, with the following side-conditions:
% \begin{flalign*}
% % \intertext{For the while loop we return a fresh unknown expectation  and the following constraints. The symbolic state is restricted to $\x, a_{13}, a_{14}$, where $\x$ is the loop counter and $a_{13}, a_{14}$. The second constraint trivially holds for any expectation.}
%     % \phantom{xx} & \qinfer{\mbox{$WHILE{-}X$}}{\TCOST~0} = \GSA && \\
%     & \phantom{x}\neg \x           \vdash \GSA \geq \underline{\qinfer{\pw{\CONSUME~1;\q_1,q_2 <* U;\x^{\Bool} <- \MEAS{\q_1} }}{\GSA}} && \\
%     & \phantom{x}\phantom{\neg} \x \vdash \GSA \geq \TCOST~0
% \end{flalign*}
\begin{align*}
    \neg \x           & \vdash \GSA \geq \qinfer{\pw{\CONSUME~1;\q_1,q_2 <* U;\x^{\Bool} <- \MEAS{\q_1} }}{\GSA} && \\
    \x & \vdash \GSA \geq \TCOST~0
\end{align*}
The latter constraint trivially holds for any expectation.
Our heuristic determines to track only the matrix variables $a_{13}, a_{24}$ in the symbolic state, i.e.\xspace $\astin = \{ \x \mapsto \x, a_{13} \mapsto a_{13}, a_{24} \mapsto a_{24}\}$.
% We define some shorthands for the symbolic probabilities and states to improve readability.
% \begin{align*} 
%     \PSYM{0}' = \sfrac{1}{2} ( 1 + 2a_{13} + 2a_{24}) \quad
%     & \PSYM{1}' = \sfrac{1}{2} ( 1 - 2a_{13} - 2a_{24})  \\
%     \symbolic{\astate_0} = \{ \x \mapsto \x, a_{13} \mapsto a_{13}, a_{24} \mapsto a_{24}\} \quad
%     & \symbolic{\astate_{k}}' = \{ \x \mapsto k, a_{13} \mapsto 0, a_{24} \mapsto 0\}
% \end{align*}
%
Next, the body of the loop is analysed.
% They are not modified after the measurement.
\begin{flalign*}
    \phantom{xx}& \qinfer{\pw{\CONSUME~1;\q_1,\q_2 <* U;\x^{\Bool} <- \MEAS{\q_1} }}{\GSA} \\
    % & = \qinfer{\pw{\CONSUME~1;\q_1 <* X;\q_1,\q_2 <* CNOT;\x^{\Bool} <- \MEAS{\q_1} }}{\GSA} &&\\
    & = \qinfer{\pw{\CONSUME~1;\q_1,\q_2 <* U}}{\TMEASURE~ (d_1\!+\!d_2, \GSB[0])~(d_3\!+\!d_4, \GSB[1])} &&\\
    % &= \qinfer{\pw{\CONSUME~1}} {\TMEASURE~ (p_0 , \GSB[0])~(p_1, \GSB[1])} && \\
    &\Rightarrow 1 + {\TMEASURE~ ( \symbolic{p_0}', \GSB[0])~(\symbolic{p_1}', \GSB[1])} && 
\end{flalign*}
Here, $d_1+d_2$ and $d_3+d_4$ are obtained from the measurement operator applied to $\astin$.
The symbolic states after applying $\oper{U}$ and the measurement operation, as well as the probabilities of reaching these states, are given by:
\begin{align*} 
    \PSYM{0}' = \sfrac{1}{2} ( 1 + 2a_{13} + 2a_{24}) \quad
    & \PSYM{1}' = \sfrac{1}{2} ( 1 - 2a_{13} - 2a_{24})  \\
    \symbolic{\astate_{0}}' = \{ \x \mapsto 0, a_{13} \mapsto 0, a_{24} \mapsto 0\}  \quad
    & \symbolic{\astate_{1}}' = \{ \x \mapsto 1, a_{13} \mapsto 0, a_{24} \mapsto 0\}
\end{align*}
% The probabilities $\PSYM{0}'$ and $\PSYM{1}'$ indicate the probability of measuring $0$ and $1$ on qubit $\q_1$ after applying $\oper{U}$ to the initial symbolic state $\symbolic{\rho_0}$ and are correspond to $\PSYM{0}'$ and $\PSYM{1}'$ in the measurement term.
%     %
Ultimatively, for the upper invariant of the loop body we obtain the following constraint: $\neg \x \vdash \GSA \geq 1 + {\TMEASURE~ ( \symbolic{p_0}', \GSB[0])~(\symbolic{p_1}', \GSB[1])}$.
    % \begin{flalign*}
        % \phantom{xx}& \neg \x \vdash \GSA \geq 1 + {\TMEASURE~ ( p_0, \GSB[0])~(p_1, \GSB[1])} &&\\
% \intertext{Consider the template $\alpha(l) \triangleq \mathsf{c_{13}} \cdot \guarded{ \neg \x \wedge a_{13} \rgeq 0} \cdot a_{13} + \mathsf{c_{24}} \cdot \guarded{ \neg \x \wedge a_{24} \rgeq 0} \cdot a_{24} + \mathsf{c} \cdot \guarded{\neg \x} \cdot 1$ with indeterminate variables $\mathsf{c_{13}}, \mathsf{c_{24}}, \mathsf{c}$.}
% \phantom{xx}& \neg \x \vdash \alpha(\ell)(\symbolic{s_0}) \geq 1 + {\TMEASURE~ ( p_0, \alpha(\ell)(\symbolic{s_0'}))~(p_1, \alpha(\ell)(\symbolic{s_1'}))} && \\
% \intertext{Applying the \textsc{Measure}, and the \textsc{FunR} rule we obtain.}
% \phantom{xx} & \neg \x \wedge \isProb{p_0} \wedge \isProb{p_1} \vdash \alpha(\ell)(\symbolic{s_0}) \geq 1 +  p_0 \cdot (\guarded{\neg \x} \cdot \mathsf{c_{13}}) + p_1 \cdot (\guarded{\neg \x} \cdot \mathsf{c_{24}}) && \\
% &   \neg \x \wedge \isProbOne{p_0} \vdash \alpha(\ell)(\symbolic{s_0}) \geq 1 + p_0 \cdot (\guarded{\neg \x} \cdot \mathsf{c_{13}}) && \\
% &   \neg \x \wedge \isProbOne{p_1} \vdash \alpha(\ell)(\symbolic{s_0}) \geq 1 + p_1 \cdot (\guarded{\neg \x} \cdot \mathsf{c_{24}})
    % \end{flalign*}
% These constraints can be further refined such that they are amenable to the \textsf{SMT} solver.
% The assignment $\mathsf{c_{13}} = \mathsf{c_{24}} = \mathsf{c} = 2$ is a model.
% Therefore, $g_\ell \triangleq 2 \cdot \guarded{ \neg \x \wedge a_{13} \rgeq 0} \cdot a_{13} + 2 \cdot \guarded{ \neg \x \wedge a_{24} \rgeq 0} \cdot a_{24} + 2 \cdot \guarded{\neg \x} \cdot 1$ is an upper bound.
\end{example}

\begin{figure*}[t]
    \centering
    % \hrulefill
    \[
\begin{array}{c}
\infer[\textsc{[FunL]}]{ \phi \vdash \TFUN~\funsym~\symbolic{\astate} \geq f}{\phi \vdash \alpha(\funsym)(\symbolic{\astate}) \geq f}
\qquad
\infer[\textsc{[Cost]}]{ \phi \vdash e \geq \TCOST~\cexp}{\phi \vdash e \geq \cexp}
\qquad
\infer[\textsc{[FunR]}]{ \phi \vdash e \geq \TFUN~\funsym~\symbolic{\astate}}{\phi \vdash e \geq \alpha(\funsym)(\symbolic{\astate})}
\\[2ex]
\infer[\textsc{[Tick]}]{ \phi \vdash e \geq \TCONSUME~\nexp~\texp }
    { \phi \vdash e \geq \guarded{\nexp \igeq 0} \cdot \nexp + \texp }
\qquad
\infer[\textsc{[Cond]}]{ \phi \vdash e \geq \TCOND~\bexp~\texp_1~\texp_2 }
    {\phi \wedge \bexp \vdash e \geq \texp_1  \qquad \phi \wedge \neg \bexp \vdash e \geq \texp_2 }
\\[2ex]
\infer[\textsc{[Meas]}]{ \phi \vdash e \geq \TMEASURE~(\aexp_1, \texp_1)~(\aexp_2, \texp_2) }
    { \begin{array}{c}
    \phi \wedge \isProbOne{\aexp_1} \vdash e \geq \texp_1 \qquad \phi \wedge \isProbOne{\aexp_2} \vdash e \geq \texp_2 \\ 
    \phi \wedge \isProb{\aexp_1} \wedge \isProb{\aexp_2} \vdash e \geq \aexp_1 \cdot \texp_1 + \aexp_2 \cdot \texp_2 
\end{array}
    }
\\[2ex]
\infer[\textsc{[MeasP]}]{ \phi \vdash e \geq \TMEASURE~(\aexp_1, \texp_1)~(\aexp_2, \texp_2) }
    {\phi \implies \aexp_1 \rgt 0 \wedge \aexp_2 \rgt 0 \quad \phi \wedge \isProb{\aexp_1} \wedge \isProb{\aexp_2} \vdash e \geq \aexp_1 \cdot \texp_1 + \aexp_2 \cdot \texp_2}
\end{array}
    \]
    \vspace{-2\belowdisplayskip}
\caption{Inference rules for term constraints to cost constraints.}
\label{fig:term-to-cost}
\end{figure*}

\subsection{Inference of Expectation Cost Constraints}
\label{ss:inference-cost-constraints}

Let $\cmd$ be a program. 
Then $\qinfer{\cmd}{\texp}$ returns a set of constraints $\phi \vdash e \geq f$ over terms.
% In this section we present a set of inference rules to reduce constraints over terms to constraints over cost expressions.
% Figure~\ref{fig:term-simp} provides a couple of useful simplification rules, while
Figure~\ref{fig:term-to-cost} provides a set of inference rules to reduce term constraints to cost constraints.
By construction, the left-hand side of the term constraints is always of the form $\TFUN~\funsym~\astin$,
and besides \textsc{[FunL]}, all other rules eliminate the term structure of the right-hand side such that all obligations eventually form cost constraints.
Rules affecting the right-hand side can be applied in sub-terms. 
For clarity, the term context is omitted in the inference and a meta rule \textsc{[Context]} is added.
Further, the inference rule \textsc{[CostFree]} voids the cost expressions to implement the expected value transformer $\qevt{\cmd}{f}$.
\[
\infer[\textsc{[Context]}]{ \phi \vdash e \geq C[f] }
    { \psi \quad \phi \wedge \phi_i \vdash e \geq C[t_i] }
\qquad
\infer[\textsc{[CostFree]}]{ \phi \vdash e \geq \TCONSUME~\nexp~\cexp~\texp }
    { \phi \vdash \texp }
\]
This set of inference rules transforms term expectation constraints to cost expectation constraints.
Most rules are self-explanatory.
Recall that cost expressions are closed under addition and scalar multiplication (with a non-negative expression).
The inference rule \textsc{[Meas]} performs a case distinction on the probability expression.
The inference rule \textsc{[MeasP]} resolves the case distinction for measurements if we can infer that the probabilities are non-zero.
% This guarantees that the probability expression evaluates to non-zero and the measurement operation and normalisation performed for $\texp_1$ and $\texp_2$ are well-defined.
% \[
% \infer[\textsc{[CostFree]}]{ \phi \vdash e \geq \TCONSUME~\nexp~\cexp~\texp }
%     { \phi \vdash \texp }
% \]

% \begin{example}
%     Let $\alpha(\funsym) = [ \x \wedge -a_{12} \igeq 0 ] \cdot -a_{12}$
% \begin{align*}
% \infer
%   {\TMEASURE~(p_{0,1}, \TFUN~\funsym~\{ \x := 0; a_{12} := 0\})~(p_{1,1}, \TFUN~\funsym~\{ \x := 1; a_{12} := 0\})}
% {
%     \psi_1 \quad \psi_2
%     \qquad
%     \phi \wedge \isProb{p_{0,1}} \wedge \isProb{p_{1,1}} \vdash p_{0,1} \cdot [ \x := 0; a_{12} := 0 ] \cdot -a_{12} + p_{1,1} \cdot [ \x := 1; a_{12} := 0 ] \cdot -a_{12}
% }
% \end{align*}    
% \end{example}

\paragraph{Well-definedness.}
The symbolic state is well-defined.
The case distinction by \textsc{[MeasP]} ensures that the probabilities and the denominators are non-zero.
For cost expressions, we restrict the form of the templates such that the denominator of norms is $1$: $\guarded{\bexp \wedge e/1 \rgeq 0} \cdot e/1$.
Therefore, the instantiation of a norm with a symbolic state is well-defined.
Additionally, cost expressions include scaling by non-negative factors and probabilities. 

\begin{example} 
    \label{ex:X2b} 
    We continue with the running example (Example~\ref{ex:X2a}), inferring cost constraints for the loop body.
    Apply the template
    \[
    \alpha(\ell) \triangleq
    \mathsf{c}_{13}\cdot\guarded{\neg\x\wedge a_{13}\ge 0}\cdot a_{13}
    + \mathsf{c}_{24}\cdot\guarded{\neg\x\wedge a_{24}\ge 0}\cdot a_{24}
    + \mathsf{c}\cdot\guarded{\neg\x}\cdot 1.
    \]
    Apply \textsc{[Meas]} and \textsc{[FunR]}:
    \begin{align*}
        & \neg \x \wedge \isProb{\symbolic{p_0}'} \wedge \isProb{\symbolic{p_1}'} \vdash \alpha(\ell)(\astin) \geq 1 +  \symbolic{p_0}' \cdot (\guarded{\neg \x} \cdot \mathsf{c_{13}}) + \symbolic{p_1}' \cdot (\guarded{\neg \x} \cdot \mathsf{c_{24}}) && \\
&   \neg \x \wedge \isProbOne{\symbolic{p_0}'} \vdash \alpha(\ell)(\astin) \geq 1 + \symbolic{p_0}' \cdot (\guarded{\neg \x} \cdot \mathsf{c_{13}}) && \\
&   \neg \x \wedge \isProbOne{\symbolic{p_1}'} \vdash \alpha(\ell)(\astin) \geq 1 + \symbolic{p_1}' \cdot (\guarded{\neg \x} \cdot \mathsf{c_{24}})
    \end{align*}

\end{example}

\subsection{Inference of Polynomial Constraints}
\label{ss:inference-polynomial-constraints}

% With the inference rules of Figure~\ref{fig:term-to-cost} term constraints are reduced to cost constraints.
Figure~\ref{fig:cost-to-poly} provides a set of inference rules to reduce cost constraints to polynomial constraints.
The rules perform case distinctions on guards.
Additional side constraints are imposed to ensure that the rules for division are well-defined.

\begin{lemma}
  \label{l:inference}
    Let $\phi \vdash e \geq f$ be a generated constraint from $\qinfer{\cdot}{\cdot}$ and let $\Delta$ be a derivation from $\phi \vdash e \geq f$ using the inference rules of Figure~\ref{fig:term-to-cost} and Figure~\ref{fig:cost-to-poly}.    
  Then $\phi \vdash e \geq f$ holds, if all derived inequalities in $\Delta$ hold.
\end{lemma}

\begin{figure*}[t]
    % \hrulefill
    \[
\begin{array}{c}
\infer[\textsc{[GuardL]}]{\phi \vdash g_1 + \guarded{\bexp} \cdot g_2 \geq f}
{
    \phi \wedge \bexp      \vdash g_1 + g_2 \geq f \qquad
    \phi \wedge \neg \bexp \vdash g_1 \geq f
}
\\[2ex]
\infer[\textsc{[GuardR]}]{\phi \vdash g \geq f_1 + \guarded{\bexp} \cdot f_2}
{
    \phi \wedge \bexp      \vdash g \geq f_1 + f_2 \qquad
    \phi \wedge \neg \bexp \vdash g \geq f_1
}
\\[2ex]
\infer[\textsc{[DivL]}]{\phi \vdash g_1 / g_2  \geq f_1}
{
    \phi \vdash g_2 \rgt 0 \qquad
    \phi \vdash g_1 \geq f_1 \cdot g_2
}
\qquad
\infer[\textsc{[DivR]}]{\phi \vdash g_1 \geq f_1 / f_2}
{
    \phi \vdash f_2 \rgt 0 \qquad
    \phi \vdash g_1 \cdot f_2 \geq f_1
}
\end{array}
    \]
    % \vspace{-2mm}
    % \hrulefill
%    \vspace{-2mm}
\vspace{-2\belowdisplayskip}
\caption{Inference rules for cost constraints to polynomial constraints.}
\label{fig:cost-to-poly}
\end{figure*}

\subsection{Inference of Certificate Constraints}
\label{ss:inference-certificate-constraints}

The tool reduces upper invariant constraints to polynomial inequalities.
We use an incomplete characterisation of non-negative polynomials based on Handelmann's Theorem~\cite{Handelman88}---and related work on the Positivstellensatz.
This is a commonly used approach for the synthesis of ranking functions (cf.\xspace \cite{ContejeanMTU05,BEFFG16,AMS20}).
First, inequality constraints are transformed using standard equalities into the following form:
$\bigwedge p_i \geq 0 \wedge \bigwedge p_j > 0 \implies p \geq 0$.
Then, constraints for a certificate of non-negativity of $p$ are generated.
Here, a certificate is a combination of non-negative polynomials using operations that preserve non-negativity.
The tool uses a non-linear combination of non-negative polynomials from the premise, weighted by non-negative coefficients $c_i$ with some chosen maximal degree $n$:
$C_n \triangleq \sum_{i = (i_1,\ldots, i_m)\in [0,n]^m }  c_i \prod p_1^{i_1} \cdots p_m^{i_m}$.
Then, equality of the generated certificate $C_n$ and polynomial $p$ is checked by comparison of coefficients.
These equality constraints can be expressed by quantifier free non-linear formulae in \textsf{QF\_NRA} theory.
By default, \TOOL\ uses the \textsf{Z3} solver\footnote{\url{https://github.com/Z3Prover/z3}}.

\begin{example} 
We continue with the running example (Example~\ref{ex:X2b}).
The constraints can be further refined such that they are amenable to the \textsf{SMT} solver.
The assignment $\mathsf{c_{13}} = \mathsf{c_{24}} = \mathsf{c} = 2$ is a model.
Therefore, $g_\ell \triangleq 2 \cdot \guarded{ \neg \x \wedge a_{13} \rgeq 0} \cdot a_{13} + 2 \cdot \guarded{ \neg \x \wedge a_{24} \rgeq 0} \cdot a_{24} + 2 \cdot \guarded{\neg \x} \cdot 1$ is an upper bound for the while loop.
% We resume the inference.
%    \begin{flalign*} 
%         % \phantom{xx} & \qinfer{-X}{\TCOST~0} && \\
%         % & \phantom{x}= \qinfer{\pw{\q_1 <* H;\q_1,q_2 <* CNOT;\x^{\Bool} <- \MEAS{\q_1};\mbox{$WHILE{-}X$}}}{\TCOST~0} && \\
%         % & \phantom{x}= \qinfer{\pw{\q_1 <* H;\q_1,q_2 <* CNOT;\x^{\Bool} <- \MEAS{\q_1}}}{\qinfer{\mbox{$WHILE{-}X$}}{\TCOST~0}} && \\
%         \phantom{xx} & \qinfer{\pw{\q_1 <* H;\q_1,q_2 <* CNOT;\x^{\Bool} <- \MEAS{\q_1}}}{\qinfer{\mbox{$WHILE{-}X$}}{\TCOST~0}} && \\
%         & \phantom{x}= \qinfer{\pw{\q_1 <* H;\q_1,q_2 <* CNOT;\x^{\Bool} <- \MEAS{\q_1}}}{\TCOST~g_\ell} && \\
%         & \phantom{x}= \qinfer{\pw{\q_1 <* H;\q_1,q_2 <* CNOT}}{\TCOST~(d_1 + d_2) \cdot 2 \cdot 1} && \\
%         & \phantom{x}= \TCOST~(\sfrac{1}{2} + a_{13} + a_{24}) \cdot 2 \cdot 1 &&
%     \end{flalign*}
%

Finally, we obtain the upper bound $(\sfrac{1}{2} + a_{13} + a_{24}) \cdot 2 \cdot 1$ on the expected cost for the ${-}X$ program by symbolic inference.
\end{example}

\subsection{Algorithm}
\label{ss:algorithm}

% We have implemented this approach in a dedicated tool termed \TOOL.
% In this subsection we discuss the main algorithm and some implementation details.
%
% Next, we discuss implementation details for automating the expected cost inference.

%\subsubsection{Automation}

\SetKwFunction{Main}{Main}
\SetKwFunction{SolverGen}{InferWhile}
\SetKwFunction{GetAsserts}{GetAsserts}
\SetKwFunction{Solver}{Infer}
\SetKwFunction{ToPoly}{ToPoly}
\SetKwFunction{ToSmt}{SMT}
\SetKwFunction{ToBound}{ToBound}
\SetKwFunction{TemplateGen}{TemplateGen}
\SetKwFunction{Template}{Template}
\SetKwFunction{Stack}{St}
\SetKwFunction{ToPolyConstraints}{ToPolyConstraints}

% \paragraph{Algorithm.}

\SetAlgoNoEnd
\SetAlgoNoLine
\SetInd{0.3em}{0.3em}
\SetKwProg{Fn}{fun}{:}{}
\SetKwFunction{StackPush}{St.Push}
\SetKwFunction{StackEmpty}{St.Empty}
\SetKwFunction{StackPopAll}{St.PopAll}
\SetKwFunction{Emit}{St.Push}
\SetKwProg{Fn}{fun}{:}{}

\ms{shorter alternative without pseudocode}
Recall Figure~\ref{fig:workflow} depicting the workflow of {\TOOL}. 
The main algorithm leverages the structure of the transformer framework and implements a \emph{modular bottom-up} analysis.
As outlined, the algorithm alternates between symbolic execution to infer constraints for upper invariants and constraint solving propagating inferred cost expressions through the program structure.
Strongly connected components (SCCs) of the control flow graph are analysed separately, assuming that the cost expressions for the continuation are already inferred.
Additionally, the \ref{idents:rext-sep} law is employed to inspect (nested) loops independently.

A key aspect of the approach is the use of \emph{templates} for cost expressions during constraint generation.
Templates are syntactic forms for candidate invariants and are instantiated with unknown coefficients to be solved for.
They are typically constructed from the guards of loops and the cost of the continuation, capturing relevant program variables and symbolic state components.
Standard templates include linear forms, mixed-linear and non-linear forms.
% The choice of template directly impacts both the precision and scalability of the analysis: richer templates can yield tighter bounds but may increase solving complexity.
In the running example, as shown in Example~\ref{ex:X2b}, a linear template is used.
The pseudocode of the algorithm is provided in Appendix~\ref{app:algorithm}.

% GM === inserted case stuy section
\subsection{Evaluation}
%\section{Evaluation}
%\label{s:case_studies}
\ms{expanded a bit}

\begin{table}[t]
\caption{Validation Results on a Benchmark of Mixed Classical-Quantum Programs.}
\label{tab:case-studies}
\begin{tabularx}{\textwidth}{X X >{\raggedleft\arraybackslash}X}
% \toprule
\textbf{Program} & \textbf{Inferred Bound} & \textbf{Time (sec)} \\
\addlinespace[6pt]
\multicolumn{3}{l}{\textbf{examples from~\cite{AMPPZ:LICS:22}}} \\
% \midrule
$COIN{-}TOSSING$               &  $2 + \guarded{-a_{12} \rgeq 0} \cdot 2 \cdot -a_{12}$             & $\leqslant 1s$         \\
$RUS$               &  $\sfrac{8}{3}$             & $\leqslant 1s$         \\
$CHAIN4$                     & $36$              & $\leqslant 20s$         \\
$CHAIN$                      &    $\guarded{k+4 \igeq 0} \cdot 148 \cdot (k + 4)$              &     $\leqslant 30s$    \\
$CHAIN{-}SIMPLE$               &  $\guarded{k+4 \igeq 0} \cdot 148 \cdot (k + 4)$              &    $\leqslant 2s$      \\
$QUANTUM{-}WALK$            &   $?$             &          \\
$QUANTUM{-}WALK(2)$            &   $d_3 + d_4$           &  $\leqslant 1s$        \\
\addlinespace[3pt]
\multicolumn{3}{l}{\textbf{examples from~\cite{ChenCHHLLT25}}}  \\
% \midrule
${-}X$                         & $2 \cdot (\sfrac{1}{2} + a_{13} + a_{24})$                      &    $\leqslant 1s$             \\
$WMGROVER(k)$                 &      $?$                &                 \\
\addlinespace[6pt]
\multicolumn{3}{l}{\textbf{examples from~\cite{PaetznickS14}}} \\
% \midrule
$(I+2iZ)/\sqrt5$            &  $\sfrac{8}{5}$           & $\leqslant 1s$                \\
$(2X+\sqrt2 Y+Z)/\sqrt7$    &  $\sfrac{8}{7}$            & $\leqslant 1s$                \\
$(I+i\sqrt2X)/\sqrt3$       &  $\sfrac{4}{3}$            & $\leqslant 1s$                \\
$(3I+2iZ)/\sqrt13$          &  $\sfrac{32}{13}$           & $\leqslant 1s$                \\
$(4I+iZ)/\sqrt17$           &  $\sfrac{64\sqrt{2}}{51\sqrt{2} + 17}$            & $\leqslant 1s$                \\
$(5I+2iZ)/\sqrt29$          &  $\sfrac{64\sqrt{2}}{29\sqrt{2} + 29}$            & $\leqslant 1s$                 \\
\addlinespace[3pt]
\multicolumn{3}{l}{\textbf{examples from IBM showcase}} \\
% \midrule
% $RUS2$~(Figure~\ref{fig:trial-code})                       & $2 + \sfrac{3}{4}$   &         $\leqslant 1s$        \\
% $RUSWHILE$~(Figure~\ref{fig:ruswhile})                     & $\sfrac{16}{5}$                    &  $\leqslant 1s$               \\
$RUS2$                       & $2 + \sfrac{3}{4}$   &         $\leqslant 1s$        \\
$RUSWHILE$                     & $\sfrac{16}{5}$                    &  $\leqslant 1s$               \\
% \bottomrule
\end{tabularx}
\smallskip
%blocksatz in figure lost
%\begin{flushleft}

\noindent
% The table indicates the upper bounds on the expected cost analysis obtained fully automatically by~\TOOL\ for various mixed classical-quantum programs taken from the literature.
% We also indicate the solving time in seconds. 
The table shows upper bounds on expected cost obtained fully automatically by~\TOOL\ for mixed classical-quantum programs from the literature, with solving times in seconds. 
A question mark $?$ indicates that our implementation was unable to find a bound.
The maximum number of qubits supported in our experiments is $6$, as exemplified by the $CHAIN$ benchmark. 
In case of success, the inferred upper bound matches the tightest known results from the literature.
The experiments were performed on a laptop equipped with an AMD Ryzen 9 with 32~GB RAM

% The table shows upper bounds on expected cost obtained fully automatically by~\TOOL\ for mixed classical-quantum programs from the literature, with solving times in seconds. 
% A question mark $?$ indicates no bound was found.
% The maximum supported number of qubits is $6$ (see $CHAIN$ benchmark). 
% When successful, inferred bounds match the tightest known results.
% Experiments ran on an AMD Ryzen 9 laptop with 32~GB RAM.

%\end{flushleft}
\vspace{-2ex}
\end{table}

% In this section, we showcase a variety of case studies drawn from the literature to demonstrate the capabilities of our prototype implementation, cf.~\ref{tab:case-studies}.
% The source code of selected programs is provided in Appendix~\ref{app:case-studies}.
% If not specified otherwise, we instrument while loops with a $\CONSUME~1$ statement.
% Therefore, the expected cost represents the expected runtime of the program---ignoring the constant factor from the number of operations.

We evaluate~\TOOL\ on examples from the literature to demonstrate its capabilities (cf.~Table~\ref{tab:case-studies}).
The motivating examples for the quantum expectation transformer from~\cite{AMPPZ:LICS:22}, previously analysed manually, are now handled automatically.
Repeat-until-success circuits~\cite{PaetznickS14} serve as a representative application of quantum programs with mid-circuit measurements and classical control flow.
We also include benchmark examples from \textsf{AutoQ 2.0}~\cite{ChenCHHLLT25} and the IBM showcase.
Source code is provided in Appendix~\ref{app:case-studies}.
We instrument while loops with $\CONSUME~1$ to represent expected runtime.
\TOOL\ provides sound upper bounds, which we compare against established bounds from the literature.
% We discuss the motivating examples from~\cite{AMPPZ:LICS:22,AMPP24}.
% %
% The $COIN{-}TOSSING$ program, depicted in Figure~\ref{fig:coin-tossing}, is a single-qubit program performing coin tossing by repeatedly applying a Hadamard gate and measuring the result.
% For this program, the expected cost depends on the initial state of the qubit.
% %
% The $RUS$ example is one of the repeat-until-success examples we discuss below.
% %
The $CHAIN$ program models a chain of $k$ entangled qubits where $\pw{CZ}$ gates may fail.
It uses nested and sequential loops, and among evaluated programs has the most qubits ($6$) and measurements ($22$).
$QUANTUM{-}WALK$ simulates a quantum walk on a line with $n$ qubits but cannot be analysed as it operates on a parametrised state space with unsupported unitaries. 
We specialise it to two qubits in $QUANTUM{-}WALK(2)$, yielding a precise bound.
$WMGROVER(k)$ is a variation of Grover's search using weak measurements (cf.~\cite{MartinezH22}) on $k$ qubits.

\section{Conclusion and Future Work}
\label{s:conclusion}

In this work, we have presented the first---to the best of our knowledge---imple\-ment\-ation of a recently proposed weakest pre-expectation semantics for mixed classical-quantum programs, cf.~\cite{AMPPZ:LICS:22,AMPP24}.
This transformer amounts to a formal framework for the analysis of runtime and costs of quantum programs in expectation.
Our tool~$\TOOL$ is able to fully automatically infer precise upper bounds on the expected cost of quantum programs, like \emph{repeat-until-success} circuits
that feature classical control flow as well as non-trivial features like mid-circuit measurements.

We provide ample experimental evidence of the viability of the prototype implementation.
%$\TOOL$ is evaluated through a number of case studies taken from the literature, as well as online resources.
We emphasise that all quantum programs in our benchmark require a high-level
encoding as a mixed classical-quantum programming language and are not expressible as sole quantum circuits. For almost all cases our tool can fully automatically
derive an expected cost analysis that, so far, could only be obtained manually.

Arguably the biggest challenge to overcome in a verification framework for quantum programs is scalability and improved viability. In our opinion, the best way to tackle this challenge lies in
\begin{inparaenum}[(i)]
  \item the development of suitable abstract state representations and 
  \item restricting the program to specific classes of high-level quantum programs. 
\end{inparaenum}
This will be subject to future work.

\bibliographystyle{splncs04}
\FloatBarrier
%GM: strings not defined
\bibliography{references}

\clearpage
\appendix
\section{Appendix}
\label{s:appendix}

\subsection{Syntax}
\label{app:syntax}

The syntax grammar for the mixed classical-quantum imperative programming language \IMQ\ is given in Figure~\ref{app:fig:synt}.
\begin{figure}[ht]
\vspace{-2\belowdisplayskip}
\[
  \begin{array}{rlll}
    \NExp &  \ni \nexp, \nexp_1, \nexp_2
    & \rgl &   \x^{\NVar} \mid i \in \mathbb{Z} \mid
             \nexp_1 + \nexp_2   \mid  \nexp_1 - \nexp_2  \mid \nexp_1 \times \nexp_2 \\
    \BExp &  \ni \bexp, \bexp_1, \bexp_2
    & \rgl &  \x^{\Bool}   \mid \true \mid \false \mid
             \nexp_1 = \nexp_2  \mid  \nexp_1 < \nexp_2 \mid \neg \bexp \mid \bexp_1 \wedge \bexp_2 \mid \bexp_1 \vee \bexp_2 \\
    \Exp &  \ni \e, \e_1, \e_2
    & \rgl &  \nexp \mid \bexp \\
    \Cmds &  \ni  \cmd, \cmd_1, \cmd_2
    & \rgl &  \SKIP \mid  \x^{\K} <- \e^{\K}  \mid  \cmd_1; \cmd_2\mid\  \!  \IF \bexp^{\Bool} \THEN \cmd_1  \ELSE \cmd_2   \\
          & & &\mid\ \WHILE \bexp^{\Bool} \DO \cmd
                \mid  \qs^{\Qubits}  <* \ope \mid   \x^{\Bool} <- \MEAS{\q^{\Qubits}} \mid \CONSUME~\nexp
  \end{array}
\]
\vspace{-2\belowdisplayskip}
 \caption{Syntax of the mixed classical-quantum imperative programming language \IMQ.}
\label{app:fig:synt}
\end{figure}

\subsection{Case Study: Repeat Until Success}
\label{app:repeat-until-success}

Here, we discuss one implementation of repeat-until-success programs~\cite{PaetznickS14} showcased on IBM Quantum\textsuperscript{\textregistered{}} devices\footnote{\url{https://learning.quantum.ibm.com/tutorial/repeat-until-success}, last accessed January 2026.}.
This quantum program has three qubits, two control qubits $\cq_1$ and $\cq_2$ and one target qubit $\trgt$.
The classical variables $\x_1$ and $\x_2$ are used to store and process the measurement outcomes of the control qubits, while the classical variable $\notdone$ is used to propagate the success of the operation.
On a successful measurement outcome of the control qubits---$\cq_1$ and $\cq_2$ are measured to be in state $\ket{00}$---%
the rotation operation $\pw{R_X}(\theta)$, with $\cos \theta = \sfrac{3}{5}$, has been successfully performed on the target qubit.
\begin{figure}[ht]
\centering
\scalebox{0.90}{
\begin{quantikz}
\lstick{$\cq_1$}    & \gate{\ket{0}} \slice{\pw{init}}  & \gate{H} & \ctrl{1} & \qw      & \ctrl{1} & \gate{H} \slice{\pw{U}} & \meter{}    &         & \slice{\pw{measure}} &  & \slice{\pw{recover}}   & \\
\lstick{$\cq_2$}    & \gate{\ket{0}}   & \gate{H} & \ctrl{1} & \qw      & \ctrl{1} & \gate{H} &             & \meter{}    & &    & & \\
\lstick{$\trgt$}    &                  & \gate{H} & \targ{}  & \gate{S} & \targ{}  & \gate{H} & \qw         &             & & & \gate[wires=1]{X} & \qw \\
\lstick{$\x_1$}     & \cw              & \cw      & \cw      & \cw      & \cw      & \cw      & \cwbend{-3} & \cw         & \cwbend{2} & \cw &  \cw & \cw\\
  \lstick{$\x_2$}     & \cw              & \cw      & \cw      & \cw      & \cw      & \cw      & \cw         & \cwbend{-3} & \cwbend{1} & \cw & \cw & \cw \\
\lstick{$\notdone$} & \cw              & \cw      & \cw      & \cw      & \cw      & \cw      & \cw         & \cw         & \gate[style={shape=circle}]{\ensuremath{\vee}} \cw & \cw & \cwbend{-3} & \cw
\end{quantikz}%
}%
\vspace{-10pt}
\caption{The quantum circuit for the $TRIAL$ subprogram.}
\label{app:fig:trial-circuit}
\end{figure}
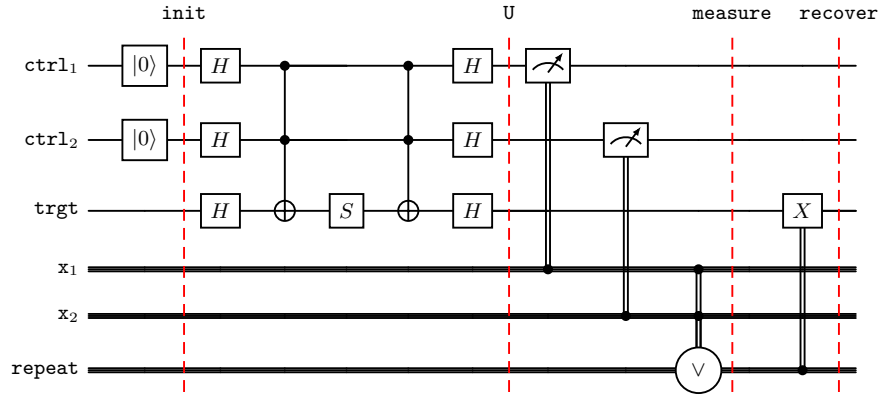

\begin{figure}[ht] 
  \centering
  \begin{minipage}[c]{0.47\textwidth} $ TRIAL(\trgt^{\Qubits}, \notdone^{\Bool}) \triangleq \\ \phantom{xx}\pw{ \cq_1 = \ket{0};\tikzrom{reset1}\\ \cq_2 = \ket{0};\tikzrom{reset2}\\ \cq_1 <* H;\tikzrom{u1}\\ \cq_2 <* H;\\ \trgt <* H;\\ \cq_1,\cq_2,\trgt <* CCNOT;\tikzrom{u3}\\ \trgt <* S;\\ \cq_1,\cq_2,\trgt <* CCNOT;\\
      \trgt <* H;\\
      \cq_2 <* H;\\
      \cq_1 <* H;\tikzrom{u2}\\

      \x_1 <- \MEAS{\cq_1}; \tikzrom{m1}\\
      \x_2 <- \MEAS{\cq_2};\\
      \notdone = \x_1 \vee \x_2; \tikzrom{m2}\\

      \tikzrom{p1}
      \IF \notdone \THEN { 
        \trgt <* X;\phantom{more } \tikzrom{p3}
      }                   \tikzrom{p2}
    }
    $
    \AddNotecopy{reset1}{reset2}{reset1}{$\triangleq \pw{init}$}
    \AddNotecopy{u1}{u2}{u3}{$\triangleq \pw{U}$}
    \AddNotecopy{m1}{m2}{m2}{$\triangleq \pw{measure}$}
    \AddNotecopy{p1}{p2}{p3}{$\triangleq \pw{recovery}$}
\end{minipage}
\
\begin{minipage}[c]{0.47\textwidth}
    $
    RUS2(\trgt^{\Qubits}) \triangleq \\
    \phantom{xx}\pw{
      \notdone^{\Bool} = \true;\\

      \IF \notdone \THEN {
        \CONSUME~2;\\
\mbox{$TRIAL$}(\trgt, \notdone);\\
      }\\
      \IF \notdone \THEN {
        \CONSUME~2;\\
\mbox{$TRIAL$}(\trgt, \notdone);\\
      }
    }
    $
 \end{minipage}
\caption{The repeat-until-success (RUS) program of the IBM showcase.}
\label{app:fig:trial-code}
\end{figure}

Figure~\ref{app:fig:trial-circuit} depicts the quantum circuit of the $TRIAL$ operation.
Figure~\ref{app:fig:trial-code} (left) shows the corresponding program in our programming language.
This is the operation that is repeated in the repeat-until-success algorithm until a successful measurement outcome is achieved. Note that the $TRIAL$ operation is meant as a subprogram and not an isolated part in which the original state is prepared externally for repeated executions.
The program is split up in four parts, initialisation (\pw{init}), application of a sequence of quantum gates (\pw{U}), measurement (\pw{measure}) and recovery (\pw{recover}).
First, the two control qubits $\cq_1$ and $\cq_2$ are initialised to $\ket{0}$.
Then, a series of quantum gates are applied to the control and target qubits, including the Hadamard gate $\pw{H}$, the controlled-controlled-NOT (or Toffoli) gate $\pw{CCNOT}$ and the phase shift gate $\pw{S}$.
This is followed by the measurement of the control qubits.
A measurement returns the classical outcome $0$ or $1$.
We interpret the measurement outcomes as Boolean values, with $0$ representing the Boolean value false and $1$ representing the Boolean value true.
If the measurement of both control qubits return false then the rotation has been successfully performed on the target qubit, and we set the variable $\notdone$ to false.
Otherwise, a conditional recovery operation (application of the gate $\pw{X}$) is performed that transforms the target qubit back to its initial state.
The program $RUS2$ of Figure~\ref{app:fig:ruswhile} depicts the complete
% repeat-until-success program
IBM showcase. It performs the $TRIAL$ operation a fixed number of times, in this case two times.
Notice that, if the trial is successful once, then repeated applications correspond to a no-op.
Given enough tries, the program is likely to perform the desired operation on the target qubit.
However, it does not guarantee that the rotation will be executed.

We are interested in \emph{quantitative properties} of programs, which instantiated for
the IBM showcase, would be the expected number of \pw{T}~gate applications in a successful run of the algorithm. 
To make this concrete we instrument the code with a \emph{ticking operation} $\CONSUME$ that counts $\pw{T}$~gates. 

Typically, one would express a conditional repetition with a loop statement.
% However, at the time of writing the IBM Quantum\textregistered{} platform only supports classical conditional control flow that may depend on mid-circuit measurements but not while statements.
%
The program $RUSWHILE$ of Figure~\ref{app:fig:ruswhile} depicts the repeat-until-success program using a classical while statement to control the repetition of the $TRIAL$ operation.
We augment this example with the \pw{tick} $2$ statement assuming that we execute the phase shift gate $\pw{S}$ with $\pw{T^2}$.
The classical control flow is easy to understand, however, reasoning about the sequence of quantum operations and the result of the measurement operations is non-trivial.

\begin{figure}[ht]
  \centering
\begin{minipage}[t]{0.48\textwidth}
    $
    RUS2(\trgt^{\Qubits}) \triangleq \\
    \phantom{xx}\pw{
      \notdone^{\Bool} = \true;\\

      \IF \notdone \THEN \mbox{$TRIAL$}(\trgt, \notdone);\\
      \IF \notdone \THEN \mbox{$TRIAL$}(\trgt, \notdone);\\
    }
    $
 \end{minipage}
\
\begin{minipage}[t]{0.40\textwidth}
    $
    RUSWHILE(\trgt^{\Qubits}) \triangleq \\
    \phantom{xx}\pw{
      \notdone^{\Bool} = \true;\\
      \WHILE \notdone \DO {
        \CONSUME~2;\\
        \mbox{$TRIAL$}(\trgt, \notdone);
      }
    }
    $
 \end{minipage}
\caption{The repeat-until-success (RUS) program of the IBM showcase.}
\label{app:fig:ruswhile}
\end{figure}

We provide some initial informal insights into the challenges of reasoning about quantum programs.
The state $\ket{\phi}$ after the initialisation $\textsf{init}$ is $\ket{00} \otimes (\alpha \ket{0} + \beta \ket{1})$ with $\left|\alpha\right|^2 + \left|\beta\right|^2 = 1$.
The sequence of elementary operations in $\textsf{U}$ can be expressed by a single unitary matrix $U$. 
The application of the unitary $U$ to quantum state $\ket{\phi}$, $U \cdot \ket{\phi}$,
yields the state before the measurement operations in $\textsf{measure}$. Recall that~$i$ denotes the imaginary number.
{\small\begin{equation*}
  % U \begin{pmatrix}\alpha\\ \beta \\ 0 \\ 0 \\ 0 \\ 0 \\ 0 \\ 0\end{pmatrix} =
  \frac{1}{4}
  \underbrace{\begin{pmatrix}
    2 + 2 i & 1 - i & 0 & 1 - i & 0 & 1 - i & 0 & -1 + i\\
    1 - i & 2 + 2 i & 1 - i & 0 & 1 - i & 0 & -1 + i & 0\\
    0 & 1 - i & 2 + 2 i & 1 - i & 0 & -1 + i & 0 & 1 - i\\
    1 - i & 0 & 1 - i & 2 + 2 i & -1 + i & 0 & 1 - i & 0\\
    0 & 1 - i & 0 & -1 + i & 2 + 2 i & 1 - i & 0 & 1 - i\\
    1 - i & 0 & -1 + i & 0 & 1 - i & 2 + 2 i & 1 - i & 0\\
    0 & -1 + i & 0 & 1 - i & 0 & 1 - i & 2 + 2 i & 1 - i\\
    -1 + i & 0 & 1 - i & 0 & 1 - i & 0 & 1 - i & 2 + 2 i
  \end{pmatrix}}_{{} \triangleq U} 
  \quad
  \underbrace{
  \begin{pmatrix}\alpha\\ \beta \\ 0 \\ 0 \\ 0 \\ 0 \\ 0 \\ 0\end{pmatrix}%
}_{{} \triangleq \ket{\phi}}
  %   =
  % \frac{1}{4}
  % \begin{pmatrix}
  %   \left(1 - i\right) \left(2 i \alpha + \beta\right) \\ 
  %   \left(1 - i\right) \left(\alpha + 2 i \beta\right) \\ 
  %   \beta \left(1 - i\right) \\ 
  %   \alpha \left(1 - i\right) \\ 
  %   \beta \left(1 - i\right) \\ 
  %   \alpha \left(1 - i\right) \\ - 
  %   \beta \left(1 - i\right) \\ - 
  %   \alpha \left(1 - i\right)
  % \end{pmatrix}
\end{equation*}}
Then the application of the unitary $U\ket{\phi}$ yields the state before the measurement operations in $\textsf{measure}$---in vector notation:
\[
U \begin{pmatrix}\alpha\\ \beta \\ 0 \\ 0 \\ 0 \\ 0 \\ 0 \\ 0\end{pmatrix} =
  \frac{1}{4}\begin{pmatrix}
    \left(1 - i\right) \left(2 i \alpha + \beta\right) \\ 
    \left(1 - i\right) \left(\alpha + 2 i \beta\right) \\ 
    \beta \left(1 - i\right) \\ 
    \alpha \left(1 - i\right) \\ 
    \beta \left(1 - i\right) \\ 
    \alpha \left(1 - i\right) \\ - 
    \beta \left(1 - i\right) \\ - 
    \alpha \left(1 - i\right)
  \end{pmatrix}
\]

To reason about the classical control flow, it is necessary to inspect the quantum state and compute the probabilities of the measurement outcomes.
The probability of $P[\cq_1 = 0~\text{and}~\cq_2 = 0]$ is given by the sum of the squared moduli of the first two vector entries:
\begin{equation*}
  P[\cq_1 = 0~\text{and}~\cq_2 = 0] 
  % & = \left\lVert \begin{pmatrix}\frac{\left(1 - i\right) \left(2 i \alpha + \beta\right)}{4} \\ \frac{\left(1 - i\right) \left(\alpha + 2 i \beta\right)}{4} \\ 0 \\ 0 \\ 0 \\ 0 \\ 0 \\ 0\end{pmatrix}\right\rVert^2 
  = \frac{\left|2 i \alpha + \beta\right|^2 + \left|\alpha + 2 i \beta\right|^2}{8} 
  = \frac{5(\left|\alpha\right|^2 + \left|\beta\right|^2)}{8}= \frac{5}{8}
  \tpkt
\end{equation*}

We observe that even for small programs, reasoning about program properties by hand becomes tedious and is prone to errors.
The quantum state space grows exponentially with the number of qubits, and elementary quantum operations---including application of unitaries and measurements---may affect the state in non-trivial ways.
This requires careful computations with amplitudes and probabilities. 
As a result, manual analysis quickly becomes infeasible for realistic programs. 
This motivates the need for computer-aided tools to automate the analysis and verification of quantum programs.

\paragraph{Quantum expectation transformer}

The \emph{quantum expectation transformer} presented by \cite{AMPPZ:LICS:22,AMPP24} is a formal framework for reasoning about expectations of quantum programs and constitutes our basis for automating the expected cost analysis. This quantum expectation transformer is a mapping from expectations---real-valued functions on program states---to expectations in a continuation-passing style:
   $\wpt{\cdot}{f} : \Cmds \to (\AState \to \Rext) \to (\AState \to \Rext)$.
With this framework, it is possible to reason about different program properties including invariants and resource consumption.
This depends on how cost annotations (i.e.\xspace $\CONSUME~\nexp$ statements) are interpreted and how continuations $f$ are defined.

First, let us assume that cost annotations are ignored, that is $\wpt{\CONSUME~\nexp}{f}(\sigma) \triangleq f(\sigma)$,
then $\wpt{\cmd}{f}$ is a function that maps input states $\sigma$ to the expected value of $f$ applied to the post states after executing $\cmd$.
In particular, if $f$ is a predicate that maps a state to $\{0,1\}$, 
then $\wpt{\cmd}{f}(\sigma)$ yields the probability that predicate $f$ holds after executing program $\cmd$ with input state $\sigma$.

If we inspect the expected cost and set $\wpt{\CONSUME~\nexp}{f}(\sigma) \triangleq \lambda \sigma . \max(0, \nexp \sigma) + f(\sigma)$,
then $\wpt{\cmd}{f}(\sigma)$ yields the expected cost of executing $\cmd$ with input state $\sigma$ with respect to the continuation $f$.
In particular, if $\cmd$ is the complete program,
then $\wpt{\cmd}{\lambda \_.\ 0}(\sigma)$ yields the expected cost of executing $\cmd$ with input state $\sigma$---the cost of the remaining execution is zero when reaching a final state.
We highlight that the expected cost analysis encapsulates the \emph{expected runtime} analysis, by preceding each statement with a $\CONSUME~1$ statement.

For automation, we differentiate between loop-free code and code with (possible nested) loops:
\begin{inparaenum}[1)]
    \item For (sub)programs without loops, we use \emph{symbolic evaluation} to mechanically compute the pre-expectations for a given continuation $f$.
    This process handles assignments, quantum gate applications, classical conditional branches, and probabilistic case analyses arising from mid-circuit measurements.
    \item For (sub)programs with loops, we generate \emph{upper-invariant constraints} by introducing unknown expectation functions at control flow points such as loop headers. 
    Symbolic evaluation is then used to relate these functions at control flow points, resulting in a system of constraints.
\end{inparaenum}
Re-consider the program $RUSWHILE$ of Figure~\ref{app:fig:ruswhile}.
We inspect the loop body:
\begin{align*}
  & \wpt{\CONSUME~2; TRIAL(\trgt, \notdone )}{f} \\
  & \qquad = \lambda \sigma .\ 2 + p_{00}(\sigma[\pw{init;U}])\cdot f(\sigma[\pw{init;U;measure}])  \\
  & \qquad \phantom{ xx \lambda \sigma .\ 1} + (1 - p_{00}(\sigma[\pw{init;U}])) \cdot f(\sigma[\pw{init;U;measure;recover}]) \\
  & \qquad = \lambda \sigma .\ 2 + \sfrac{5}{8}\cdot f(\sigma[\pw{init;U;measure}])  
   + \sfrac{3}{8} \cdot f(\sigma[\pw{init;U;measure;recover}])
\end{align*}
% For succinctness we use consider the initial program parts of $TRIAL$: $\pw{init;unitary;measure;undo}$.
Let $\sigma$ denote the state before execution.
We use $\sigma[\cmd]$ to indicate the state after executing $\cmd$.
Here, $p_{00}(\sigma[\pw{init;U}])$ indicates the probability of returning $\ket{00}$ from the measurement.
We obtain the cost expectation constant $2$, which is due to the $\CONSUME~2$ statement, plus the weighted sum of $f$ applied to all possible post states.
With probability $p_{00}(\sigma[\pw{init;U}]) = \sfrac{5}{8}$ the post state is $\sigma[\pw{reset;U;measure}]$, and
with probability $1-p_{00}(\sigma[\pw{init;U}]) = \sfrac{3}{8}$ the post state is $\sigma[\pw{reset;U;measure;recover}]$, which includes the recovery operation to reset the target qubit.
Our tool derives the constant success probability $p_{00}(\sigma[\pw{init;U}]) = \sfrac{5}{8}$ fully automatically. 

% For automation we propose a \emph{template-based} approach (cf.~\cite{CMTU05,BEFFG16,WFGCQS:PLDI:19,AMS20}) to synthesise expectation functions.
% First, \emph{upper-invariant} constraints are formed from while loops $\wpt{\WHILE \bexp \DO \cmd}{F}$:
% \begin{align*}
% \bexp & \vdash G \geq \wpt{\cmd}{G} \\
% \neg \bexp & \vdash G \geq F
% \end{align*}
% Here, $G$ and $F$ constitutes unknown expectation functions, which have yet to be determined.
Let $G$ and $F$ be unknown expectation functions.
Then, we form the upper invariant constraints for $RUSWHILE$:
\begin{align*}
 \notdone & \vdash G \geq \wpt{\CONSUME~2; TRIAL(\trgt, \notdone)}{G} \\
  \neg \notdone & \vdash G \geq F
\end{align*}
To obtain the expected cost of our motivating example, we need to find expectations for $G$ and $F$ that satisfy these constraints for all possible input states $\sigma$.

For automation, we use a \emph{template-based} approach (cf.~\cite{CMTU05,BEFFG16,WFGCQS:PLDI:19,AMS20}) to synthesise expectation functions.
Following~\cite{AMS20,AMS23,pubs}, we introduce a class of \emph{cost-expressions} to form certificate constraints over program variables
and unknown indeterminate constants.
For the motivating example, it suffices to instantiate the constant function $\lambda \sigma .\ \guarded{\notdone} \cdot \mathsf{c_0}$ for~$G$ and the zero function
$\lambda \sigma .\ 0$ for~$F$. Note that the cost of the execution after exiting the loop is zero.
Here, $\guarded{\notdone} \cdot \mathsf{c_0}$ is a guarded expression that evaluates to the unknown indeterminate $\mathsf{c_0}$, if $\notdone$ holds and to $0$ otherwise.
% \begin{align*}
%  \notdone & \vdash \mathsf{c_0} \geq 1 + \sfrac{5}{8} \cdot (\guarded{\notdone} \cdot \mathsf{c_0}) + \sfrac{3}{8} \cdot (\guarded{\notdone} \cdot \mathsf{c_0})\\
%  \neg \notdone & \vdash \mathsf{c_0} \geq 0
% \end{align*}
%
The variable $\notdone$ evaluates to true only for the probabilistic case $\sfrac{3}{8} \cdot f(\sigma[\pw{init;U;measure;recover}])$.
Then, we obtain the following constraints:
\begin{align*}
 \notdone & \vdash \mathsf{c_0} \geq 2 + \sfrac{3}{8} \cdot (\guarded{\notdone} \cdot \mathsf{c_0})\\
 \neg \notdone & \vdash \mathsf{c_0} \geq 0
\end{align*}
By case analysis, these constraints can be reduced further to constraints which are amenable to \textsf{SMT} solvers.
The assignment $\mathsf{c_0} = \sfrac{16}{5}$ satisfies the constraints.
Therefore, $\guarded{\notdone} \cdot \mathsf{c_0}$ is an upper bound on the cost of the loop statement.
Finally, 
\[
  \qet{RUSWHILE}{\lambda \sigma .\ 0} = \sfrac{16}{5} \tpkt
\]

In this example the inferred final constraints may appear rather simple.
However, this is only due to precise handling and simplifications of symbolic expressions to obtain constant probabilities.
Furthermore, a suitable template has been chosen carefully, which simplifies the right-hand side of our constraints significantly.
% Making sure that these pieces fall in line smoothly, was a main challenge for our automation, cf.~Section~\ref{s:automation}.
%
% Our prototype $\TOOL$ yields a tight bound on $RUSWHILE$ (almost) in the blink of an eye.

\subsection{Case Study: {-X}}

Here, we discuss the expected cost analysis of the $-X$ program from Figure~\ref{app:fig:X2}.
This program is a two-qubit system.
It applies a non-standard gate $\pw{-X}$ on the second qubit, i.e.\xspace it applies $\pw{X}$ on the second qubit and negates the amplitude.
For example, $\alpha \ket{10} + \beta \ket{11}$ is transformed into $-\beta \ket{10} - \alpha \ket{11}$.
\begin{figure*}[ht]
  \centering
\begin{minipage}[t]{0.40\textwidth}
    $
    WHILE{-}X(\x^{\Bool},\q_1^{\Qubits}, \q_2^{\Qubits}) \quad\triangleq\quad \\
    \phantom{xx}\pw{
      \WHILE \neg \x \DO {
        \CONSUME~1; \\
        \q_1 <* X; \\
        \tikzrom{cq-mark1}
        \q_1 <* H; \\
        \q_1,q_2 <* CNOT; \hspace{1ex}\tikzrom{cq-mark3} \\
        \x <- \MEAS{\q_1}; 
        \tikzrom{cq-mark2}
      }
    }
    $
\AddNotecopy{cq-mark1}{cq-mark2}{cq-mark3}{$\triangleq \pw{U}$}
 \end{minipage}
\
  \begin{minipage}[t]{0.40\textwidth}
    $
    \xtwo(\q_1^{\Qubits},\q_2^{\Qubits}) \quad\triangleq\quad \\
    \phantom{xx}\pw{
        \tikzrom{cq-mark1}
      \q_1 <* H; \\
      \q_1,q_2 <* CNOT; \\
      \x^{\Bool} <- \MEAS{\q_1};\hspace{1ex}\tikzrom{cq-mark3}
      \tikzrom{cq-mark2} \\
      \mbox{$WHILE{-}X$}(\x, \q_1, \q_2);\\
    }
    $
\AddNotecopy{cq-mark1}{cq-mark2}{cq-mark3}{$\triangleq \pw{U}$}
\end{minipage}
\\[5pt]
\caption{The $-X$ quantum program.}
\label{app:fig:X2}
\end{figure*}

Wlog., we denote the quantum state in Dirac notation instead of density operators for clarity.
As input, we consider parametrised quantum states of the form $\alpha\ket{10} + \beta\ket{11}$, where $\alpha,\beta \in \Comp$ and $|\alpha|^2 + |\beta|^2 = 1$.
We illustrate the expected cost by inspecting the probability distribution induced by operational semantics of \IMQ. 
\begin{align*}
   & \ecost_{-X}(\cst, \alpha\ket{10} + \beta\ket{11}) = \\ 
   & \phantom{x} \lmulti 1:(\pw{\q_1 <* H;\q_1,q_2 <* CNOT;\x^{\Bool} <- \MEAS{\q_1};\mbox{$WHILE{-}X$}}, s,\alpha\ket{10} + \beta\ket{11}) \rmulti  \\
   & \phantom{x}\toomqw{0} \lmulti 1:(\pw{\q_1,q_2 <* CNOT;\x^{\Bool} <- \MEAS{\q_1};\mbox{$WHILE{-}X$}}, s,\alpha\ket{00} + \beta\ket{01} - \alpha\ket{10} - \beta\ket{11}) \rmulti  \\
   & \phantom{x}\toomqw{0} \lmulti 1:(\pw{\x^{\Bool} <- \MEAS{\q_1};\mbox{$WHILE-{X}$}}, s,\alpha\ket{00} + \beta\ket{01} - \alpha\ket{11} - \beta\ket{10}) \rmulti  \\
   & \phantom{x}\toomqw{0} \lmulti 
    \sfrac{1}{2}:(\downarrow, [x := 1],-\alpha\ket{11} - \beta\ket{10}), \\
   & \phantom{\toomqw{0}xxx}\; \sfrac{1}{2}:(\pw{\CONSUME~1; \q_1 <* X; U;\mbox{$WHILE{-}X$}},\ [x := 0],\ \alpha\ket{00} + \beta\ket{01}) \rmulti  \\
   & \phantom{x}\toomqw{\sfrac{1}{2}} \lmulti 
    \sfrac{1}{2}:(\downarrow, [x := 1],-\alpha\ket{11} - \beta\ket{10}), \\
   & \phantom{\toomqw{0}xxx}\; \sfrac{1}{2}:(\pw{\q_1 <* X; U;\mbox{$WHILE{-}X$}},\ [x := 0],\ \alpha\ket{00} + \beta\ket{01}) \rmulti  \\
   & \phantom{x}\toomqw{0} \lmulti 
    \sfrac{1}{2}:(\downarrow, [x := 1],-\alpha\ket{11} - \beta\ket{10}), \\
   & \phantom{\toomqw{0}xxx}\; \sfrac{1}{2}:(\pw{U;\mbox{$WHILE{-}X$}},\ [x := 0],\ \alpha\ket{10} + \beta\ket{11}) \rmulti  \\
   & \phantom{\toomqw{0}xxx}... \\
   & \phantom{x}\toomqw{\sfrac{1}{4}}^\ast \lmulti 
    \sfrac{3}{4}:(\downarrow, [x := 1],-\alpha\ket{11} - \beta\ket{10}) ,
   \sfrac{1}{4}:(\pw{\mbox{$WHILE{-}X$}},\ [x := 0],\ \alpha\ket{00} + \beta\ket{01}) \rmulti  \\
\end{align*}
We observe that, before the measurement $\pw{\x^{\Bool} <- \MEAS{\q_1}}$ the quantum state is always in
$\alpha\ket{00} + \beta\ket{01} - \alpha\ket{10} - \beta\ket{11}$.
The measurement of $\q_1$ yields $\ket{0}$ with probability $\sfrac{1}{2}(|\alpha|^2 + |\beta|^2) = \sfrac{1}{2}$, and thus yields $\ket{1}$ with probability $\sfrac{1}{2}$. We obtain
\begin{equation*}
  \ecost_{-X}(\cst, \alpha\ket{10} + \beta\ket{11}) = \sfrac{1}{2} + \sfrac{1}{4} + \cdots = \sum_{n=1}^{\infty} \sfrac{1}{2^n} = 1
  \tpkt
\end{equation*}
The more general state $\alpha\ket{00} + \beta\ket{01} + \gamma\ket{10} + \delta\ket{11}$ is more complicated to analyse.
The first measurement of $\q_1$ yields $\ket{0}$ with probability $\sfrac{1}{2}(|\alpha+\gamma|^2 + |\beta+\delta|^2)$ and $\ket{1}$ with probability $\sfrac{1}{2}(|\alpha-\gamma|^2 + |\beta-\delta|^2)$.
% We examine this case in more detail when discussing automation.

Next, we present the inference of the expected cost of $\qinfer{-X}{\TCOST~0}$.
To highlight the next sub-term to be evaluated, we use underlining.
First, we fix some expressions to improve readability.
\begin{align*} 
    \PSYM{0}' = \sfrac{1}{2} ( 1 + 2a_{13} + 2a_{14}) \quad
    & \PSYM{1}' = \sfrac{1}{2} ( 1 - 2a_{13} - 2a_{14})  \\
    \astin = \{ \x \mapsto \x, a_{13} \mapsto a_{13}, a_{14} \mapsto a_{14}\} \quad
    & \symbolic{\sigma_{k}}' = \{ \x \mapsto k, a_{13} \mapsto 0, a_{14} \mapsto 0\}
\end{align*}
The expected cost of $-X$ is $\qinfer{-X}{\TCOST~0}$. We proceed bottom-up.
\begin{flalign*} 
    \phantom{xx} & \qinfer{-X}{\TCOST~0} && \\
    & = \qinfer{\pw{\q_1 <* H;\q_1,q_2 <* CNOT;\x^{\Bool} <- \MEAS{\q_1};\mbox{$WHILE{-}X$}}}{\TCOST~0} && \\
    & = \qinfer{\pw{\q_1 <* H;\q_1,q_2 <* CNOT;\x^{\Bool} <- \MEAS{\q_1}}}{\underline{\qinfer{\mbox{$WHILE{-}X$}}{\TCOST~0}}} &&
\end{flalign*}
For the while loop we return a fresh unknown expectation  and the following constraints. The symbolic state is restricted to $\x, a_{13}, a_{14}$.
The second constraint trivially holds for any expectation.
\begin{flalign*}
% \intertext{For the while loop we return a fresh unknown expectation  and the following constraints. The symbolic state is restricted to $\x, a_{13}, a_{14}$, where $\x$ is the loop counter and $a_{13}, a_{14}$. The second constraint trivially hods for any expectation.}
    \phantom{xx} & \qinfer{\mbox{$WHILE{-}X$}}{\TCOST~0} = \GSA && \\
    & \phantom{x}\neg \x           \vdash \GSA \geq \underline{\qinfer{\pw{\CONSUME~1;\q_1 <* X;\q_1 <* H;\q_1,q_2 <* CNOT;\x^{\Bool} <- \MEAS{\q_1} }}{\GSA}} && \\
    & \phantom{x}\phantom{\neg} \x \vdash \GSA \geq \TCOST~0
\end{flalign*}
Next, the body is analysed. The symbolic state is not modified after the measurement.
\begin{flalign*}
    \phantom{xx}& \qinfer{\pw{\CONSUME~1;\q_1 <* X;\q_1 <* H;\q_1,\q_2 <* CNOT;\x^{\Bool} <- \MEAS{\q_1} }}{\GSA} \\
    % & = \qinfer{\pw{\CONSUME~1;\q_1 <* X;\q_1,\q_2 <* CNOT;\x^{\Bool} <- \MEAS{\q_1} }}{\GSA} &&\\
    & = \qinfer{\pw{\CONSUME~1;\q_1<* X;\q_1 <* H;\q_1,\q_2 <* CNOT}}{\TMEASURE~ (d_1\!+\!d_2, \GSB[0])~(d_3\!+\!d_4, \GSB[1])} &&\\
    &= \qinfer{\pw{\CONSUME~1}} {\TMEASURE~ (\PSYM{0}' , \GSB[0])~(\PSYM{0}', \GSB[1])} && \\
    &= 1 + {\TMEASURE~ ( \PSYM{0}', \GSB[0])~(\PSYM{1}', \GSB[1])} && 
\end{flalign*}
Therefore, the following constraint is constructed.
    \begin{flalign*}
        \phantom{xx}& \neg \x \vdash \GSA \geq 1 + {\TMEASURE~ ( \PSYM{0}', \GSB[0])~(\PSYM{1}', \GSB[1])} &&\\
\intertext{Consider the template $\alpha(l) \triangleq \mathsf{c_{13}} \cdot \guarded{ \neg \x \wedge a_{13} \rgeq 0} \cdot a_{13} + \mathsf{c_{24}} \cdot \guarded{ \neg \x \wedge a_{24} \rgeq 0} \cdot a_{24} + \mathsf{c} \cdot \guarded{\neg \x} \cdot 1$ with indeterminate variables $\mathsf{c_{13}}, \mathsf{c_{24}}, \mathsf{c}$.}
\phantom{xx}& \neg \x \vdash \alpha(\ell)(\astin) \geq 1 + {\TMEASURE~ ( \PSYM{0}', \alpha(\ell)(\symbolic{s_0}'))~(\PSYM{1}', \alpha(\ell)(\symbolic{\sigma_1}'))} && \\
\intertext{Applying the \textsc{Meas}, and the \textsc{FunR} rule we obtain.}
\phantom{xx} & \neg \x \wedge \isProb{\PSYM{0}'} \wedge \isProb{\PSYM{1}'} \vdash \alpha(\ell)(\astin) \geq 1 +  \PSYM{0}' \cdot (\guarded{\neg \x} \cdot \mathsf{c_{13}}) + \PSYM{1}' \cdot (\guarded{\neg \x} \cdot \mathsf{c_{24}}) && \\
&   \neg \x \wedge \isProbOne{\PSYM{0}'} \vdash \alpha(\ell)(\astin) \geq 1 + \PSYM{0}' \cdot (\guarded{\neg \x} \cdot \mathsf{c_{13}}) && \\
&   \neg \x \wedge \isProbOne{\PSYM{1}'} \vdash \alpha(\ell)(\astin) \geq 1 + \PSYM{1}' \cdot (\guarded{\neg \x} \cdot \mathsf{c_{24}})
    \end{flalign*}
These constraints can be further refined such that they are amenable to the \textsf{SMT} solver.
The assignment $\mathsf{c_{13}} = \mathsf{c_{24}} = \mathsf{c} = 2$ is a model.
Therefore, $g_\ell \triangleq 2 \cdot \guarded{ \neg \x \wedge a_{13} \rgeq 0} \cdot a_{13} + 2 \cdot \guarded{ \neg \x \wedge a_{24} \rgeq 0} \cdot a_{24} + 2 \cdot \guarded{\neg \x} \cdot 1$ is an upper bound.
We resume the inference.
   \begin{flalign*} 
        % \phantom{xx} & \qinfer{-X}{\TCOST~0} && \\
        % & \phantom{x}= \qinfer{\pw{\q_1 <* H;\q_1,q_2 <* CNOT;\x^{\Bool} <- \MEAS{\q_1};\mbox{$WHILE{-}X$}}}{\TCOST~0} && \\
        % & \phantom{x}= \qinfer{\pw{\q_1 <* H;\q_1,q_2 <* CNOT;\x^{\Bool} <- \MEAS{\q_1}}}{\qinfer{\mbox{$WHILE{-}X$}}{\TCOST~0}} && \\
        \phantom{xx} & \qinfer{\pw{\q_1 <* H;\q_1,q_2 <* CNOT;\x^{\Bool} <- \MEAS{\q_1}}}{\qinfer{\mbox{$WHILE{-}X$}}{\TCOST~0}} && \\
        & \phantom{x}= \qinfer{\pw{\q_1 <* H;\q_1,q_2 <* CNOT;\x^{\Bool} <- \MEAS{\q_1}}}{\TCOST~g_\ell} && \\
        & \phantom{x}= \qinfer{\pw{\q_1 <* H;\q_1,q_2 <* CNOT}}{\TCOST~(d_1 + d_2) \cdot 2 \cdot 1} && \\
        & \phantom{x}= \TCOST~(\sfrac{1}{2} + a_{13} + a_{24}) \cdot 2 \cdot 1 &&
    \end{flalign*}
For the parametrised state $\alpha\ket{10} + \beta\ket{11}$ we can derive the expected cost $1$.

% \subsection{Details}
%
% \begin{example}
%     Consider the post-measurement state of the projective measurement operator $M_{0,1}$ of a two-qubit system to perform a measurement that produces the classical outcome $0$ for qubit $q_1$.
% \begin{gather*}
%     \symbolic{\rho_0} = 
%     \begin{pmatrix}
%         (d_1,0)           & (a_{12}, b_{12}) & (a_{13},b_{13}) & (a_{14}, b_{14}) \\
%         (a_{12},-b_{12}) & (d_2,0)         & (a_{23},b_{23}) & (a_{24}, b_{24}) \\
%         (a_{13},-b_{13}) & (a_{23},-b_{23}) & (d_3,0)         & (a_{34}, b_{34}) \\
%         (a_{14},-b_{14}) & (a_{24},-b_{24}) & (a_{34},-b_{34}) & (d_4,0)
%     \end{pmatrix} \quad
% M_{0,1} =
% \begin{pmatrix}
% (1,0) & (0,0) & (0,0) & (0,0) \\
% (0,0) & (1,0) & (0,0) & (0,0) \\
% (0,0) & (0,0) & (0,0) & (0,0) \\
% (0,0) & (0,0) & (0,0) & (0,0)     
% \end{pmatrix}\\
% p_{0,1}(\symbolic{\rho_0}) = tr(M_{0,1} \symbolic{\rho_0}) = d_1 + d_2 \qquad
% m_{0,1}(\symbolic{\rho_0}) = \begin{pmatrix}
%     (\frac{d_1}{d_1+d_2}   ,0) & (\frac{a_{12}}{d_1+d_2},0) & (0,0) & (0,0) \\
%     (\frac{a_{12}}{d_1+d_2},0) & (\frac{d_2}{d_1+d_2},0) & (0,0) & (0,0) \\
%     (0,0) & (0,0) & (0,0) & (0,0) \\
%     (0,0) & (0,0) & (0,0) & (0,0)
% \end{pmatrix}
% \end{gather*}
% \end{example}

\subsection{Term Constraint Simplifications}
\label{app:term-simp}

We apply several simplification rules to reduce the complexity of the generated constraints.
The inference rules \textsc{[Cond-Top]} and \textsc{[Cond-Bot]} resolve case distinction in conditional expressions when the truth value of the guard can be inferred from the context.
In practice, this often resolves branching which is due to measurements and conditional branching as it occurs for example for qubit initialisations:
$\q^{\Qubits}  <- \ket{0} \triangleq \x <- \MEAS{\q} ; \IF \x \THEN \q <* \oper{X}  \ELSE \SKIP$.
In this case, measurements induce branching, however the case distinction for the conditional branching can be resolved using the simplification rules.
The inference rule \textsc{[MeasEq]} simplifies constraints in which the continuations are (syntactically) equivalent.

\begin{figure*}[ht]
    % \hrulefill
    \[
\begin{array}{c}
\infer[\textsc{[Cond-Top]}]{ \phi \vdash e \geq \TCOND~\bexp~\texp_1~\texp_2 }
    { \phi \implies \bexp \quad \phi \vdash e \geq \texp_1 }
\quad
\infer[\textsc{[Cond-Bot]}]{ \phi \vdash e \geq \TCOND~\bexp~\texp_1~\texp_2 }
    { \phi \implies \neg\bexp \quad \phi \vdash e \geq \texp_2 }
\\[2ex]
\infer[\textsc{[MeasEq]}]{ \phi \vdash e \geq \TMEASURE~(\aexp_1, \texp_1)~(\aexp_2, \texp_2) }
    {t_1 = t_2 \quad \phi \vdash e \geq t_1}
\end{array}
    \]
    % \vspace{-2mm}
    % \hrulefill
%    \vspace{-2mm}
    \vspace{-\belowdisplayskip}
\caption{Selection of inference rules for term constraints simplifications.}
\label{fig:term-simp}
\end{figure*}

\subsection{Algorithm for Expected Cost Inference}
\label{app:algorithm}

Algorithm~\ref{fig:alg} depicts the high-level algorithm for the expected cost inference.
It implements a bottom-up strategy to infer upper invariants for (nested) loops one after another and propagating the inferred cost expression.
Let $\cmd$ be a program. 
\Main{$\cmd$} starts the inference with the initial cost expression $\TCOST~0$ calling $\Solver(\cmd, \TCOST~0)$.
The function \Solver works on the top-level scope of the program.
It calls and propagates the symbolic inference $\qinfer{\cmd}{f}$ for non loop statements.
When $\cmd$ is a while statement then $\SolverGen$ is called to generate constraints for upper invariants.
A $\Stack$ object is used to keep track of constraints.
$\SolverGen$ first constructs a cost expression $g_\ell \triangleq \funzero$ for some fresh location symbol $\ell$.
It proceeds to construct the constraints for the upper invariant recursively.
For nested loops, multiple constraints are generated reflecting the control flow structure.
\SolverGen returns the cost expression $g_\ell$ which is an unknown expectation representing the upper bound for the loop statement plus the continuation which has been passed as argument to the function call.
This returns the control back to $\Solver$ which then tries to infer a certificate cost expression for $g_\ell$.
First, all generated constraints are popped from the stack.
Then, templates are generated from the cost expressions $f$ and $g_\ell$, as well as all generated constraints.
The function \ToPoly reduces the constraints to polynomial inequalities.
These constraints are then passed to an \textsf{SMT} solver.
If successful, the model is used to generate a bound for the cost expression $g_\ell$.
The inference may fail, in which \texttt{Unknown} is returned.

\begin{algorithm}[t]\small
\begin{minipage}{0.49\linewidth}%
\Fn{\Solver{$\cmd, f$}}{
\Switch{$\cmd$}{
    \Case{$\WHILE \bexp \DO \cmd_\ell$}{
        $g_\ell \gets \SolverGen{$\StackEmpty,\cmd_\ell, f$}$\\
        $cs_1 \gets \StackPopAll{}$ \\
        $t \gets \TemplateGen{$f, g_\ell, cs_1$}$\\
        $cs_2 \gets \ToPoly{$t,cs_1$}$\\
        $(sat,model) \gets \ToSmt{$cs_2$}$\\
        \If{$sat$}{
            $g \gets \ToBound{$g_\ell, model$}$\\
            \Return{$\TCOST~g$}
        }
        \Else{
            \Return{\texttt{Unknown}}
        }
    }
    \Case{$\cmd_1;\cmd_2$}{
        $f' \gets$ \Solver{$\cmd_2, f$}\\
        \Return{\Solver{$\cmd_1, f'$}}
    }
    \Else{
        \Return{$\qinfer{\cmd}{f}$};
    }
}
}
\end{minipage}%
\begin{minipage}{0.48\linewidth}
\Fn{\SolverGen{$\Stack, \cmd, f$}}{
\Switch{$\cmd$}{
    \Case{$\WHILE \bexp \DO \cmd_\ell$}{
        $g_\ell \gets \TFUN~\funsym~\symbolic{\astate_0}$\\
        $g_\ell' \gets$ \SolverGen{$\Stack, \cmd_\ell, g_\ell$}\\
        \Emit{$\phantom{\neg}\bexp \vdash g_\ell \geq g_\ell'$}\\
        \Emit{$\neg \bexp \vdash g_\ell \geq f$}\\
        \Return{$g_\ell$}\;
    }
    \Case{$\cmd_1;\cmd_2$}{
        $f' \gets$ \SolverGen{$\Stack, \cmd_2, f$}\\
        \Return{\SolverGen{$\Stack, \cmd_1, f'$}}
    }
    \Else{
        \Return{$\qinfer{\cmd}{f}$};
    }
    }
}
\vspace{1em}
\Fn{\Main{$\cmd$}}{
    \Return{\Solver{$\cmd, \TCOST~0$}}
}
\end{minipage}
\vspace{.5\baselineskip}
\caption{Bottom-up algorithm for expected cost inference.}
\label{fig:alg}
\end{algorithm}

Algorithm~\ref{fig:alg} outlines a bottom-up approach for inferring upper invariants and propagating cost expressions for loops.
\Main{$\cmd$} begins with an initial cost of zero, calling $\Solver(\cmd,\TCOST~0)$. 
The $\Solver$ function handles top-level statements, applies $\qinfer{\cmd}{f}$ for non-loop statements and $\SolverGen$ for while loops to generate constraints for upper invariants.
A stack object $\Stack$ is used to collect constraints of nested loops structures.
$\SolverGen$ constructs a cost expression $g_\ell$ with a fresh function symbol $\ell$, recursively generates constraints for nested loops, and returns $g_\ell$ as the upper bound for the loop plus continuation. 
$\Solver$ then attempts to infer a certificate for $g_\ell$ using templates to reduce collected constraints to polynomial inequalities via $\ToPoly$, and solving them with an \textsf{SMT} solver. 
If successful, a bound is derived and propagated to the outer context; otherwise, \texttt{Unknown} is returned.

\clearpage
\subsection{Benchmark Examples}
\label{app:case-studies}

Here, we illustrate selected examples from the benchmark.

\begin{figure}[ht] 
  \centering
    \begin{minipage}[t]{0.45\textwidth} 
        $COIN{-}TOSS(\q^{\Qubits}) \triangleq \\ 
        \phantom{xx}\pw{
        \x^{\Bool} = \true;\\
      \WHILE \x \DO {
        \CONSUME~1;\\
        \q <* H;\\
        \x = \MEAS{\q};\\
      }
        }
    $
    \end{minipage}
\caption{The quantum coin tossing program.}
\label{fig:coin-tossing}
\end{figure}

\newcommand{\anc}{\mathsf{anc}}

\newcommand{\tvar}{\mathsf{t}}
\newcommand{\kvar}{\mathsf{k}}

\begin{figure}[ht] 
    \centering 
    \begin{minipage}[c]{0.45\textwidth} 
        $FUSE(q3,q4,x2) \triangleq \\
        \phantom{xx}\pw{
          \CONSUME~1;\\
          \anc_1^{\Qubits} = \ket{+};
          \anc_2^{\Qubits} = \ket{+};\\
          \cq_1 <- \MEAS{\anc_1};\\
          \cq_2 <- \MEAS{\anc_2};\\
          \IF \cq_1 \wedge \cq_2 \THEN {
            \q_3,\q_4 <* CZ;\\
            \x_2 = \true;\\
          } \ELSE {
            \x_2 = \MEAS{\q_3};\\
            \x_2 = \MEAS{\q_4};\\
            \x_2 = \false;\\
          }
          }$
\\[5pt]
        $CHAIN4() \triangleq \\
        \phantom{xx}\pw{
        \x_0 ^{\Bool} = \false;\\
        \WHILE \neg \x_0 \DO {
          \q_1 = \ket{+};
          \q_2 = \ket{+};\\
          FUSE(\q_1,\q_2,\x_1);
        }\\

        \x_2 ^{\Bool} = \false;\\
        \WHILE \neg \x_2 \DO {
          \q_3 = \ket{+};
          \q_4 = \ket{+};\\

          FUSE(\q_3,\q_4,\x_2);
        }\\
        FUSE(\q_2,\q_3,\x_0);
        }
    $
    \end{minipage}
    \
    \begin{minipage}[c]{0.45\textwidth}
        $CHAIN() \triangleq \\
        \phantom{xx}\pw{
        \tvar = 0;\\
        \WHILE \kvar > \tvar \wedge \tvar >= 0 \DO {
          \CMT{$CHAIN(\q_{t+1},\q_{t+2},q_{t+3},\q_{t+4})$}\\
          CHAIN4();\\
          \CMT{$FUSE(\q_{t+1},\q_{t+2},\x)$}\\
          \CONSUME~1;\\
          \anc_1^{\Qubits} = \ket{+};\\
          \anc_2^{\Qubits} = \ket{+};\\
          \cq_1 <- \MEAS{\anc_1};\\
          \cq_2 <- \MEAS{\anc_2};\\
          \IF \cq_1 \wedge \cq_2 \THEN {
            \tvar = \tvar + 4;\\
          } \ELSE {
            \tvar = \tvar - 1;\\
          }
        }
        }
        $
    \end{minipage}
\caption{A quantum program to create a chain of $k$ entangled qubits.}
\label{fig:fuse-chain}
\end{figure}

\begin{figure}[ht] 
    \centering 
    \begin{minipage}[t]{0.90\textwidth} 
        $(4I+iZ)/\sqrt17 \triangleq \\
        \phantom{xx}\pw{
          \x = \true;\\
          \WHILE \x \DO {
            \CONSUME~1;\\

    \q = |0>;
    \q <* X; \\

    \q <* H;
    \q <* S; \\
    \q <* T; 
    \q <* H; 
    \q <* T; 
    \q <* H; 
    \q <* T; 
    \q <* H; 
    \q <* T; 
    \q <* S^\dagger; 
    \q <* H;  \\

    \q,q' <* CZ; \\

    \q' <* X;

    \q <* H;
    \q <* S;
    \q <* H; \\
    \q <* T; 
    \q <* H; 
    \q <* T; 
    \q <* H; 
    \q <* T; 
    \q <* H;  \\

    \q,\q' <* CZ; \\

    \q' <* X;

    \q <* H;
    \q <* S;
    \q <* H;
    \q <* T; \\
    \q <* H;
    \q <* T;
    \q <* H; 
    \q <* T; 
    \q <* H; 
    \q <* T; 
    \q <* H; 
    \q <* S; 
    \q <* H; \\

    \x <- \MEAS{q};\\
        }
        }
    $
    \end{minipage}
\caption{RUS program implementing $(4I+iZ)/\sqrt17$.}
\label{fig:rus-10b}
\end{figure}

\clearpage
\subsection{Omitted Proofs}
\label{ss:proofs}

\begin{theorem}[Soundness Theorem]
  Let $\cmd$ be a program, $f$ be a cost expression, and $\alpha$ be an assignment.
  Then $\qet{\cmd}{\sem[\alpha]{f}} \leq \sem[\alpha]{\qinfer{\cmd}{f}}$, if all side-conditions are met.
\end{theorem}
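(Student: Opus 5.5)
We prove the statement by structural induction on $\cmd$, and the claim we carry through the induction is stronger than the one stated: for every expectation term $\texp$ (not only cost expressions) and every assignment $\alpha$,
\[
\sem[\alpha]{\qinfer{\cmd}{\texp}} \geq \qet{\cmd}{\sem[\alpha]{\texp}}
\]
holds whenever the side-conditions generated for $\cmd$ are met. The generalisation is needed because the sequence case feeds the term $\qinfer{\cmd_2}{\texp}$, which need not be a cost expression, into $\cmd_1$.

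Alongside the induction we prove one technical \emph{substitution lemma}. For a symbolic update $\{\x \mapsto \e\}$, $\{\qs \mapsto U\stin U^\dagger\}$ or $\{\x \mapsto k; \qs \mapsto m_{k,i}(\stin)\}$, it states
\[
\sem[\alpha]{\texp\{\cdot\}} = \sem[\alpha]{\texp}[\cdot],
\]
where the right-hand side is the semantic update $[\x := \e]$, $[\Phi_{U_{\qs}}]$ or $[\x := k; m_{k,i}]$ respectively. The proof is by induction on $\texp$. The only real work is at leaves $\TFUN~\funsym~\symbolic{\astate}$ and $\TCOST~\cexp$. Here we use three facts: composing substitutions on the symbolic state commutes with evaluation; the entrywise polynomial formulas for $U\stin U^\dagger$ and $m_{k,i}(\stin)$ agree with the matrix semantics; and the density-matrix variables left untracked by a partial symbolic state are irrelevant to the expression. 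For $m_{k,i}$ the denominators are $p_{k,i}$, so the lemma holds only on states where $p_{k,i}>0$. This is exactly the region in which that branch contributes to the transformer.

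With the lemma in hand, the base and compositional cases follow the rows of Figure~\ref{fig:symbolic-inference} against Figure~\ref{fig:qwpt}.
\begin{itemize}
\item \textbf{Skip, assignment, unitary.} These are equalities by the lemma.
\item \textbf{Tick.} The interpretation of $\TCONSUME~\nexp~\texp$ is $\max(0,\sem{\nexp})+\sem[\alpha]{\texp}$ by definition, which matches $\wpt{\CONSUME~\nexp}{\cdot}$.
\item \textbf{Conditional.} The interpretation of $\TCOND$ is exactly the semantic $\up{\sem{\bexp}}$ combination.
\item \textbf{Measurement.} We compare $\TMEASURE$ with $f[\x:=0;m_{0,i}] \up{\proba{0}{i}} f[\x:=1;m_{1,i}]$ by splitting on whether $p_{0,i}$ is $0$, $1$, or strictly between. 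In each case the branch with zero weight vanishes on both sides, and the lemma applies to the others. This split mirrors the three premises of \textsc{[Meas]}.
\item \textbf{Sequence.} Using $\wpt{\cmd_1;\cmd_2}{f}=\wpt{\cmd_1}{\wpt{\cmd_2}{f}}$, we apply the induction hypothesis to $\cmd_2$. We then use \ref{idents:mono} to replace $\wpt{\cmd_2}{\sem[\alpha]{\texp}}$ by the larger $\sem[\alpha]{\qinfer{\cmd_2}{\texp}}$ inside $\wpt{\cmd_1}{\cdot}$. Finally, the induction hypothesis for $\cmd_1$ applies with the term $\qinfer{\cmd_2}{\texp}$.
\end{itemize}
Throughout, the side-conditions of a compound command are the union of those of its parts, so the hypotheses are available for each subcommand.

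The while case is the heart of the argument. Let $g=\sem[\alpha]{\TFUN~\funsym~\astin}$; since $\astin$ is the identity, $g$ is just the interpretation of $\alpha(\funsym)$. The side-conditions are met, and by Lemma~\ref{l:inference} they are read semantically under the context $\phi$. The first side-condition gives
\[
\sem{\bexp}\cdot g \geq \sem{\bexp}\cdot\sem[\alpha]{\qinfer{\cmd}{\TFUN~\funsym~\astin}}.
\]
By the induction hypothesis for the body, applied to the term $\TFUN~\funsym~\astin$, the right-hand side is at least $\sem{\bexp}\cdot\wpt{\cmd}{g}$. The second side-condition gives $\sem{\neg\bexp}\cdot g\geq\sem{\neg\bexp}\cdot\sem[\alpha]{\texp}$. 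Law \ref{idents:ui} then yields $g\geq\wpt{\WHILE \bexp \DO \cmd}{\sem[\alpha]{\texp}}$, which is the claim, because the term returned for the loop is $\TFUN~\funsym~\astin$.

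I expect the main obstacle to be the precise meaning of ``side-conditions are met'': they are derived constraints over terms under Boolean contexts. The step I would try first is to show, via Lemma~\ref{l:inference}, that $\phi \vdash e \geq f$ holding means $\sem{\phi}\cdot\sem[\alpha]{e}\geq\sem{\phi}\cdot\sem[\alpha]{f}$ pointwise on all genuine density-matrix states. Two points in that step need care. First, the relaxed state constraints must over-approximate the real ones, so that soundness is preserved. Second, the measurement-normalisation denominators must be handled on the zero-probability set; I would treat them by the same three-way case split on $p_{0,i}$ used in the measurement case.
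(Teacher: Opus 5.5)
Your proposal is correct and follows essentially the same route as the paper: structural induction on the program, with equalities for the atomic commands via substitution, monotonicity in the sequence case, a case split on the measurement probability, and, for loops, the upper-invariance law combined with the side-conditions. Your strengthening of the induction hypothesis to arbitrary expectation terms, the explicit substitution lemma, and the explicit appeal to the induction hypothesis for the loop body make precise steps that the paper's proof leaves implicit.
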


\begin{proof}
  \label{p:soundness}
  The proof is by induction on the structure of the program $\cmd$.

  \textbf{Case}$[\SKIP]$: Then by unfolding.
  \begin{align*}
    \qet{\SKIP}{\sem[\alpha]{f}} = \sem[\alpha]{f} = \sem[\alpha]{\qinfer{\SKIP}{f}}
  \end{align*}

  \textbf{Case}$[\TCONSUME~n]$: Then by unfolding and case distinction on non-negativity of the tick expression.
  \begin{align*}
  & \qet{\TCONSUME~\nexp}{\sem[\alpha]{f}} \\
  & = \max(0,\sem[\alpha]{\nexp}) + \sem[\alpha]{f}\\
  & = \sem[\alpha]{\TCONSUME~\nexp~f} \\
  & =\sem[\alpha]{\qinfer{\TCONSUME~\nexp}{f}}
  \end{align*}

  \textbf{Case}$[\x <- \e]$: Then by unfolding and the definition of substitution.
  \begin{align*}
  & \qet{\x <- \e}{\sem[\alpha]{f}} \\
  & = \sem[\alpha]{f}[\x := \e]\\
  & = \sem[\alpha]{f\{\x \mapsto \e\}} \\
  & =\sem[\alpha]{\qinfer{\x <- \e}{f}}
  \end{align*}

  \textbf{Case}$[\qs <* \ope]$: Then by unfolding and the definition of substitution.
  \begin{align*}
  & \qet{\qs <* \ope }{\sem[\alpha]{f}} \\ 
  & = \sem[\alpha]{f}[\Phi_{U_{\qs}}] \\
  & = \sem[\alpha]{f\{\qs \mapsto U\symbolic{\rho_0}U^\dagger\}} \\
  & = \sem[\alpha]{\qinfer{\qs <* \ope }{f}}           
  \end{align*}

  \textbf{Case}$[\IF \bexp \THEN \cmd_1 \ELSE \cmd_2]$: Then by unfolding, case distinction on the guard and IH.
  \begin{align*}
  & \qet{\IF \bexp \THEN \cmd_1 \ELSE \cmd_2}{\sem[\alpha]{f}} \\ 
  & = \wpt{\cmd_1}{\sem[\alpha]{f}}\up{\sem[\alpha]{\bexp}} \wpt{\cmd_2}{\sem[\alpha]{f}} \\
  & \leq \sem[\alpha]{\qinfer{\cmd_1}{f}}\up{\sem[\alpha]{\bexp}} \sem[\alpha]{\qinfer{\cmd_2}{f}} \\
  & = \sem[\alpha]{\TCOND~\bexp~\qinfer{\cmd_1}{f}~\qinfer{\cmd_2}{f}} \\
  & = \sem[\alpha]{\qinfer{\IF \bexp \THEN \cmd_1 \ELSE \cmd_2}{f}}
  \end{align*}

  \textbf{Case}$[\cmd_1 ; \cmd_2]$: Then by unfolding and IH.
  \begin{align*}
      & \qet{\cmd_1 ; \cmd_2}{\sem[\alpha]{f}} \\ 
      & =\qet{\cmd_1}{\wpt{\cmd_2}{\sem[\alpha]{f}}} \\
      & \leq \qet{\cmd_1}{\sem[\alpha]{\qinfer{\cmd_2}{f}}} \\
      & \leq \sem[\alpha]{\qinfer{\cmd_1}{\qinfer{\cmd_2}{f}}}\\
      & = \sem[\alpha]{\qinfer{\cmd_1 ; \cmd_2}{f}}
  \end{align*}

  \textbf{Case}$[\x <- \MEAS{\q_i}]$: Then by unfolding, definition of substitution and case distinction on the probability.
  \begin{align*}
  & \wpt{\x <- \MEAS{\q_i}}{\sem[\alpha]{f}} \\
    & = \sem[\alpha]{f}[\x := 0; m_{0,i}] \up{\proba{0}{i}} \sem[\alpha]{f}[\x := 1; m_{1,i}] \\
  & = \sem[\alpha]{\TMEASURE~(p_{0,i}(\symbolic{\rho_0}), f\{\x \mapsto 0; \qs \mapsto m_{0,i}(\symbolic{\rho_0})\}) (p_{1,i}(\symbolic{\rho_0}),  f\{\x \mapsto 1; \qs \mapsto m_{1,i}(\symbolic{\rho_0})\})} \\
  & = \sem[\alpha]{\qinfer{\x <- \MEAS{\q_i}}{f}}
  \end{align*}

  \textbf{Case}$[\WHILE \bexp \DO \cmd]$: Then by unfolding, upper invariant law, and the assumption that all side-constraints hold.
  \begin{align*}
  & \wpt{\WHILE \bexp \DO \cmd}{\sem[\alpha]{f}} \\
  & \leq \alpha(\ell)(\symbolic{\astate_0}) \\
  & = \sem[\alpha]{\funzero} \\
  & = \sem[\alpha]{\qinfer{\WHILE \bexp \DO \cmd_\ell}{f}}
  \end{align*}
\end{proof}

% \begin{lemma}
%   Let $\phi \vdash e \geq f$ be a generated constraint from $\qinfer{\cdot}{\cdot}$ and let $\Delta$ be a derivation from $\phi \vdash e \geq f$ using the inference rules of Figure~\ref{fig:term-simp} and Figure~\ref{fig:term-to-cost}.    
%   Then $\phi \vdash e \geq f$ holds, if all derived inequalities in $\Delta$ hold.
% \end{lemma}
% \begin{proof}
%   \label{p:inference}
%   By induction on the term structure and case distinction. 
% \end{proof}

\end{document}